\newtheorem{proposition}{Proposition}
\newtheorem{lemma}{Lemma}
\newtheorem{corollary}{Corollary}
\definecolor{mygreen}{RGB}{28,172,0} 
\definecolor{mylilas}{RGB}{170,55,241}
\pgfplotsset{compat=1.16}
\newcommand{\pa}[1]{\left(#1\right)} 
\newcommand{\br}[1]{\left[#1\right]} 
\newcommand{\acc}[1]{\left\{#1\right\}} 
\newcommand{\abs}[1]{\left|#1\right|} 
\newcommand{\st}{\text{s.t.}}
\newcommand{\Tr}[1]{\text{Tr}\acc{#1}} 
\newcommand{\norm}[2]{\left\|#1\right\|_{#2}} 
\newcommand{\normw}[2]{\left\|#1\right\|_{#2,\mathbf{w}}} 
\newcommand{\sr}[1]{\rho\pa{#1}} 
\newcommand{\sigmsq}[1]{\sigma_{\max}^2\pa{#1}} 
\newcommand{\argmax}[1]{\underset{#1}{\arg \max}}
\newcommand{\argmin}[1]{\underset{#1}{\arg \min}}
\newcommand{\Rate}[1]{R_{#1}\pa{\mathbf{Q}_{#1},\mathbf{Q}_{-#1}}} 
\newcommand{\EE}[1]{\text{EE}_{#1}\pa{\mathbf{Q}_{#1},\mathbf{Q}_{-#1}}} 
\newcommand{\Ge}{\mathcal{G}_E} 
\newcommand{\Ges}{\mathcal{G}_E^\star} 
\newcommand{\C}[2]{\mathbb{C}^{#1\times#2}} 
\newcommand{\R}[2]{\mathbb{R}^{#1\times#2}} 
\newcommand{\Q}[1]{\mathcal{Q}_q\pa{#1}} 
\newcommand{\Qeq}[1]{\partial\mathcal{Q}_q\pa{#1}} 
\newcommand{\Qbar}[1]{\overline{\mathcal{Q}}_q\pa{#1}} 
\newcommand{\Qbareq}[1]{\partial\overline{\mathcal{Q}}_q\pa{#1}} 
\newcommand{\Ri}[1]{\mathbf{R}_{-#1}\pa{\mathbf{Q}_{-#1}}}  
\newcommand{\Rm}[1]{\mathbf{R}_{-#1}^{-1}\pa{\mathbf{Q}_{-#1}}} 
\newcommand{\Hb}[1]{\overline{\mathbf{H}}_{#1}} 
\newcommand{\Qb}[1]{\overline{\mathbf{Q}}_{#1}} 
\newcommand{\Rib}[1]{\overline{\mathbf{R}}_{-#1}\pa{\overline{\mathbf{Q}}_{-#1}}} 
\newcommand{\Rmb}[1]{\overline{\mathbf{R}}_{-#1}^{-1}\pa{\overline{\mathbf{Q}}_{-#1}}} 
\newcommand{\Intb}[1]{\Hb{#1#1}^H\Rmb{#1}\Hb{#1#1}} 
\newcommand{\invIntb}[1]{\pa{\Hb{#1#1}^H\Rmb{#1}\Hb{#1#1}}^{-1}} 
\newcommand{\Xb}[1]{\overline{\mathbf{X}}_{#1}\pa{\Qb{-#1}}} 
\newcommand{\EEb}[1]{\overline{\text{EE}}_{#1}\pa{\Qb{#1},\Qb{-#1}}} 
\newcommand{\Rateb}[1]{\overline{R}_{#1}\pa{\Qb{#1},\Qb{-#1}}} 
\newcommand{\Geb}{\overline{\mathcal{G}}_E} 
\newcommand{\PEEb}[1]{\hat{P}_{#1}\pa{\Qb{-#1}}}
\newcommand{\PEEbNE}[1]{\hat{P}_{#1}\pa{\QbNE{-#1}}}
\newcommand{\PEEbp}[1]{\hat{P}_{#1}\pa{\Qb{-#1}^{'}}}
\newcommand{\Pub}[1]{P_{#1}^{*}\pa{\Qb{-#1}}} 
\newcommand{\Qaggb}{\overline{\mathbf{Q}}}
\newcommand{\QaggbNE}{\overline{\mathbf{Q}}^{\text{NE}}}
\newcommand{\QbNE}[1]{\overline{\mathbf{Q}}^{\text{NE}}_{#1}}
\newcommand{\XbNE}[1]{\overline{\mathbf{X}}_{#1}\pa{\QbNE{-#1}}} 
\newcommand{\setAggb}{\partial\overline{\mathcal{Q}}(\mathbf{p})}
\newcommand{\Fb}[1]{\overline{\mathbf{F}}_{#1}\pa{\Qaggb}} 
\newcommand{\Fbp}[1]{\overline{\mathbf{F}}_{#1}\pa{\Qaggb^{'}}}
\newcommand{\Faggb}{\overline{\mathbf{F}}\pa{\Qaggb}} 
\newcommand{\Faggbp}{\overline{\mathbf{F}}\pa{\Qaggb^{'}}}
\newcommand{\Eb}[1]{\overline{\mathbf{E}}_{#1}}
\newcommand{\Sc}{\overline{\mathbf{S}}}
\newcommand{\Gb}[2]{\overline{\mathbf{G}}_{#1}\pa{#2}}
\newcommand{\Db}{\overline{\mathbf{\Delta}}}
\newcommand{\RmbD}[1]{\overline{\mathbf{R}}_{-#1}^{-1}\pa{\Db}} 
\newcommand{\invIntbD}[1]{\pa{\Hb{#1#1}^H\RmbD{#1}\Hb{#1#1}}^{-1}} 
\newcommand{\Scb}{\overline{\overline{\mathbf{S}}}}
\newcommand{\PEEaggb}{\hat{\mathbf{p}}\pa{\Qaggb}} 
\newcommand{\PEEaggbp}{\hat{\mathbf{p}}\pa{\Qaggb^{'}}} 
\newcommand{\setEEAggb}{\partial\overline{\mathcal{Q}}\pa{\hat{\mathbf{p}}}}
\newcommand{\setEEAggbs}{\partial\overline{\mathcal{Q}}\pa{\Qaggb}}
\newcommand{\setEEAggbp}{\partial\overline{\mathcal{Q}}\pa{\hat{\mathbf{p}}^{'}}}
\newcommand{\setEEAggbps}{\partial\overline{\mathcal{Q}}\pa{\Qaggb^{'}}}
\newcommand{\fullsetEEAggb}{\overline{\mathcal{Q}}\pa{\mathbf{p}}}
\newcommand{\setEEAggbNE}{\partial\overline{\mathcal{Q}}\pa{\QaggbNE}}
\newcommand{\QVI}{\text{QVI}\pa{\partial\overline{\mathcal{Q}},\overline{\mathbf{F}}}}
\newcommand{\BREEb}[1]{\overline{\textbf{\text{BR}}}_{#1}\pa{\Qb{-#1}}} 
\newcommand{\BREEAggb}{\overline{\textbf{\text{BR}}}\pa{\Qaggb}} 
\newcommand{\BREEAggbp}{\overline{\textbf{\text{BR}}}\pa{\Qaggb^{'}}} 
\newcommand{\Xaggb}{\overline{\mathbf{X}}\pa{\Qaggb}} 
\newcommand{\Xaggbp}{\overline{\mathbf{X}}\pa{\Qaggb^{'}}}
\newcommand{\FaggbNEH}{\overline{\mathbf{F}}^H\pa{\QaggbNE}}
\newif
\begin{document}
%
\title{Energy Efficient Competitive Resource Allocation in MIMO networks}
%
%
%

\author{Guillaume~Thiran,~\IEEEmembership{Member,~IEEE,}
        Ivan~Stupia,~\IEEEmembership{Member,~IEEE,}
        and~Luc~Vandendorpe,~\IEEEmembership{Fellow,~IEEE}
\thanks{G. Thiran, I. Stupia and L. Vandendorpe are with the Université Catholique de Louvain,
B-1348 Louvain-la-Neuve, Belgium (e-mail: guillaume.thiran@uclouvain.be;  ivan.stupia@uclouvain.be;
luc.vandendorpe@uclouvain.be). }
\thanks{GT is a Research Fellow of the Fonds de la Recherche Scientifique - FNRS.}

}

\maketitle

\begin{abstract}
This paper considers the competitive resource allocation problem in Multiple-Input Multiple-Output (MIMO) interfering channels, when users maximize their energy efficiency. Considering each transmitter-receiver pair as a selfish player, conditions on the existence and uniqueness of the Nash equilibrium of the underlying noncooperative game are obtained. A decentralized asynchronous algorithm is proven to converge towards this equilibrium under the same conditions. Two frameworks are considered for the analysis of this game. On the one hand, the game is rephrased as a Quasi-Variational Inequality (QVI). On the other hand, the best response of the players is analyzed in light of the contraction mappings. For this problem, the contraction approach is shown to lead to tighter results than the QVI one. When specializing the obtained results to OFDM networks, the obtained conditions appear to significantly outperform  state-of-the-art works, and to lead to much simpler decentralized algorithms. Numerical results finally assess the obtained conditions in different settings.
\end{abstract}

\begin{IEEEkeywords}
MIMO channel, distributed resource allocation, energy efficiency, game theory,
Nash equilibria, quasi-variational inequality, contraction theory, asynchronous IWFA.
\end{IEEEkeywords}

%
\IEEEpeerreviewmaketitle

\section{Introduction}
%
%
%
%
\IEEEPARstart{T}{he} interference channel model enables to describe communication systems in which radio resources are shared among several users. The design of resource allocation schemes, either coordinated or decentralized, plays a central role in this model and lots of works have considered noncooperative policies based on game theory in the last 20 years. However, for scenarios in which fully interfering users wish to communicate simultaneously, performances of such decentralized approaches are often not good enough to balance the intrinsic defaults of competitive schemes, i.e. their non-Pareto optimality and interference-dependent convergence rate. 

Yet, solutions envisioned for future wireless networks open again the way to such approaches. Indeed, in Mesh-Edge Computing (MEC) networks, the radio resources not only serve the communication, but also support the computation offloading to nearby servers. This offloading is by definition delay-constrained and sporadic, and centralized solutions may fail to capture such asynchronous events. This is why many recent works consider joint computation and communication resource allocation through game theory, with as a channel model the interference channel \cite{li2019computation,liu2019joint,moura2018game}. Such a joint allocation obviously heavily depends on the fundamental performance limits of both the communication and computation part. Hence, this calls for an in-depth understanding of the competitive radio-resource allocation in the interference channel model.

\subsection{Related Works}
This interference channel model has been widely studied for users maximizing their rate \cite{scutari_optimal_2007,scutari_competitive_2008,ren2010distributed,etkin2007spectrum,scutari_optimal_2008, yu2002distributed,cendrillon2007autonomous,shum2007convergence,luo2006analysis,yamashita2004nonlinear,scutari2010convex,pang_design_2010,scutari_mimo_2009,scutari2009mimo}. Several frameworks have been considered, among which is the field of nonlinear complementary problems \cite{luo2006analysis,yamashita2004nonlinear}, contraction analysis \cite{scutari_competitive_2008,ren2010distributed,etkin2007spectrum,scutari_optimal_2008} and variational inequalities \cite{scutari2010convex,pang_design_2010}. Remarkably, for parallel interference channels (i.e. for OFDM/DSL games), these approaches lead to the same conditions on the feasibility of the competitive resource allocation (see \cite{scutari_competitive_2008} for an exhaustive comparison). Several flavours of decentralized algorithms (sequential, synchronous, asynchronous) known as Iterative WaterFilling Algorithms (IWFA) have been proven to converge towards the equilibrium of the game under these feasibility conditions. The MIMO interference channel model has however been far less studied. To the best of our knowledge, the contraction framework has been preferred over the other approaches \cite{scutari_mimo_2009, scutari2009mimo, scutari_competitive_2008} and has led to convergence conditions similar to those of parallel channels.\\
The extension of those results to games in which users maximize their Energy Efficiency (EE) has been considered in \cite{BacciEECompetitive, stupia_power_2015,ZapponeCompetitiveEE,miao2011distributed,zhong2013energy,buzzi2011potential}. Again, several approaches have been followed, which include the theory of standard functions \cite{ZapponeCompetitiveEE,miao2011distributed}, potential games \cite{zhong2013energy,buzzi2011potential}, quasi-variational inequalities \cite{stupia_power_2015} and generalized Nash equilibrium problems \cite{BacciEECompetitive}. Only few of the above works consider MIMO channels, as we are only aware of \cite{zhong2013energy,pan2014totally}. Yet, a unified view on the effectiveness of the different frameworks is currently lacking.

\subsection{Contributions}
The aim of this paper is to fill the above gap by analyzing competitive resource allocation in MIMO networks when users seek to maximize their EE. The study of this game through the frameworks of contraction theory and QVI, even though not sufficient to provide a unified view, already leads to interesting conclusions. Indeed,  contrarily to games in which users maximize their rate, the two frameworks lead surprisingly to different results: the contraction approach provides tighter conditions than what the QVI framework is able to. \\
The obtained conditions ensure the existence of a unique Nash equilibrium, and the convergence of an asynchronous EE-IWFA. They can be interpreted as a limitation of the Multi-User Interference (MUI), as well as a smoothness requirement on the power mapping of the players. When particularized to an OFDM network, our new conditions outperform state-of-the-art works, and a much simpler decentralized algorithm is obtained. 
\subsection{Organization}
The paper is organized as follows. In \Cref{sec:systemmodel}, the system model is described and the noncooperative game is formulated.  In \Cref{sec:BR}, the Best Response (BR) of the players is analyzed, leading to a simpler formulation of the game. In \Cref{sec:NEproblem}, the existence and uniqueness of the Nash equilibrium are analyzed through the frameworks of QVI and contraction theory. The two approaches are then compared in light of their results. Finally, in \Cref{sec:algo}, a decentralized algorithm is designed and numerical validation is performed.
\subsection{Notations}
The following notations are used throughout the paper. Lowercase,  lowercase boldface and uppercase boldface denote respectively scalars, vectors and matrices.  The closed and open interval from $a$ to $b$ are written as $[a\,,\,b]$ and $(a\,,\,b)$. $[\mathbf{A}]_{qr}$ denotes the (q,r)th element of $\mathbf{A}$.  $\R{m}{n}$ and $\C{m}{n}$ are the sets of $m\times n$ real and complex matrices. The operators $\mathcal{E}\acc{.}$, $(.)^T$, $(.)^H$, $(.)^\sharp$, $(.)^s$, $\text{det}\pa{.}$, $\Tr{.}$, $\norm{.}{p}$, $\norm{.}{F}$, $\sigma_{\max}\pa{.}$, $\sr{.}$ correspond respectively to the expectation, transpose, Hermitian transpose, Moore-Penrose pseudoinverse, symmetric part, determinant, trace, $p$-norm, Frobenius norm, maximum singular value and spectral radius. The notation $\mathbf{A} \succ \mathbf{B}$ (resp. $\succeq$, $\preceq$, $\prec$) means that $\mathbf{A}-\mathbf{B}$ is positive definite (resp. positive semi-definite, negative semi-definite, negative definite), while $>$, $\geq$, $<$, $\leq$ denote the componentwise order relations. $\text{diag}\pa{\mathbf{A}}$ produces a vector with the diagonal elements of $\mathbf{A}$. The operator $(.)^+$ is defined as $\max(.,\mathbf{0})$ where the maximum is taken componentwise if this applies to a matrix. Given two sets $\mathcal{S}_1$ and $\mathcal{S}_2$, $\mathcal{S}\triangleq\mathcal{S}_1\bigtimes\mathcal{S}_2$ is their Cartesian product. Finally, $[\mathbf{A}]_{\mathcal{S}}$  denote the projection of $\mathbf{A}$ onto the set $\mathcal{S}$ with respect to the Frobenius norm.

\section{System model}
\label{sec:systemmodel}
As in \cite{scutari_optimal_2007, scutari_optimal_2008, scutari_mimo_2009,stupia_power_2015}, we consider $Q$ Transmitter-Receiver Pairs (TRP) such that each transmitter wishes to communicate with the corresponding receiver, the set of TRPs being denoted by $\Omega \triangleq \acc{1,\hdots,Q}$. Each TRP is equipped at the emitter and receiver side with respectively $n_{T_q}$ and $n_{R_q}$ antennas for the $q$th TRP. The channel between the $r$th transmitter and the $q$th receiver is hence a MIMO channel assumed to be frequency flat, thus described by a possibly rank-deficient matrix $\mathbf{H}_{qr} \in \C{n_{R_q}}{n_{T_r}}$ of rank $r_q$. Letting $\mathbf{x}_q \in \mathbb{C}^{n_{T_q}}$ be the symbols emitted by the $q$th transmitter, the received symbols $\mathbf{y}_q \in \mathbb{C}^{n_{R_q}}$ are modeled in a complex baseband representation as 
\begin{equation}
    \mathbf{y}_q = \mathbf{H}_{qq}\mathbf{x}_q+\sum_{r\neq q}\mathbf{H}_{qr}\mathbf{x}_r + \mathbf{n}_r,
\end{equation}
with $\mathbf{n}_r \in \mathbb{C}^{n_{R_q}}$ an Additive White Gaussian Noise (AWGN) vector with covariance matrix $\mathbf{R}_{n_q}$, assumed to be nonsingular (implying $\mathbf{R}_{n_q}\succ \mathbf{0}$). At the emitter side, the covariance matrix of the symbols $\mathbf{Q}_q = \mathcal{E}\acc{\mathbf{x}_q\mathbf{x}_q^H}$
can be tuned to optimize the system performances, under the condition that the mean emitted power is smaller than a maximum power $P_q$: 
\begin{align}
    \mathcal{E}\acc{\norm{\mathbf{x}_q}{2}^2}=  \Tr{\mathbf{Q}_q} \leq P_q.
\end{align}
The hardware circuitry power consumed by the $q$th transmitter is supposed to be a constant $\Psi_q$ \cite{stupia_power_2015}. We furthermore assume that channel matrices remain constant during the whole transmission, and that each TRP has a perfect knowledge of its direct channel matrix. Moreover, each TRP is able to measure perfectly the MUI-plus-noise covariance matrix. Considering the MUI as noise, the rate achieved by the $q$th TRP is bounded by the capacity of the channel, reading as \cite{goldsmith2003capacity, scutari_mimo_2009}
 \begin{equation}
    \Rate{q} = \log\det\pa{\mathbf{I}+\mathbf{H}_{qq}^H\Rm{q}\mathbf{H}_{qq}\mathbf{Q}_q},
    \label{eq:Rate}
\end{equation}
with the MUI-plus-noise covariance matrix
\begin{equation}
    \label{eq:MUI}
    \Ri{q} \triangleq \mathbf{R}_{n_q} + \sum_{r\neq q}\mathbf{H}_{qr}\mathbf{Q}_r\mathbf{H}_{qr}^H\succ\mathbf{0},
\end{equation}
and with $\mathbf{Q}_{-q}=\acc{\mathbf{Q}_{r}}_{r\neq q}$  denoting the other players' strategies.
The EE can then be defined as
\begin{align}
    \EE{q} = \frac{\Rate{q}}{\Psi_q + \Tr{\mathbf{Q}_q}}.
\end{align}
The interaction of the $Q$ players, each of them optimizing selfishly its own EE, is thus captured by the following EE game $\mathcal{G}_E$: \begin{eqnarray}
    \pa{\mathcal{G}_E}\, :  & \begin{array}{cc}
         \underset{\mathbf{Q}_q}{\text{maximize}}& \EE{q}  \\ 
         \text{subject to} & \mathbf{Q}_q \in \Q{P_q}
    \end{array}
    & \forall q \in \Omega,
    \label{eq:EEgame}
\end{eqnarray}
with the strategy set of each user being defined\footnote{The positive semi-definiteness ensures the Hermitian property as any matrix having only real eigenvalues is Hermitian \cite[Section 7.1]{horn_matrix_1990}.} as
\begin{equation}
        \Q{P_q}\triangleq\acc{\mathbf{Q}\in\mathbb{C}^{n_{T_q}\times n_{T_q}} : \, \mathbf{Q}\succeq\mathbf{0},\, \Tr{\mathbf{Q}}\leq P_q}.
        \label{eq:adm_strat}
\end{equation}
Similarly, the set 
 \begin{equation}
    \Qeq{P_q}\triangleq\acc{\mathbf{Q}\in\mathbb{C}^{n_{T_q}\times n_{T_q}} : \, \mathbf{Q}\succeq\mathbf{0},\, \Tr{\mathbf{Q}}= P_q},
    \label{eq:adm_strat_full}
\end{equation}
is introduced, corresponding to covariance matrices using all the available power. 

\section{Best Response Analysis}
\label{sec:BR}
In this section, for the sake of completeness, we briefly recall known results on the BR of the players in case of the EE game: we i) cast the game into a full-column rank one \cite{scutari_mimo_2009}, ii) provide its BR \cite{xu_energy_2013} and translate the game into an equivalent, easier to analyze, pseudo-game \cite{stupia_power_2015}, iii) interpret the BR as a projection \cite{scutari_mimo_2009}.

\subsection{Full-Column Rank Game}
As shown in \cite{scutari_mimo_2009}, the behaviour of such MIMO game heavily depends on the rank of the channel matrices $\mathbf{H}_{qq}$, making the analysis difficult. To overcome this problem, an equivalent game formulation enables to limit the analysis to full rank square channel matrices and full-column rank rectangular matrices (hence removing the case of full-row rank matrices). Defining the compact Singular Value Decomposition (SVD) of the direct channel matrices as $\mathbf{H}_{qq} = \mathbf{U}_{q,1}\mathbf{\Sigma}_{qq}\mathbf{V}_{q,1}^H $ with $\mathbf{\Sigma}_{qq} \in \R{r_q}{r_q}$ a positive definite diagonal matrix, and with $\mathbf{U}_{q,1} \in \C{n_{R_q}}{r_q}$ and $\mathbf{V}_{q,1} \in \C{n_{T_q}}{r_q}$ being semi-unitary,
the BR of each user is indeed always such that \cite{scutari_mimo_2009}
\begin{align}
    \mathbf{Q}_q &= \mathbf{V}_{q,1}\Qb{q}\mathbf{V}_{q,1}^H,
\end{align}
with $\Qb{q}\in\Qbar{P_q}$ and with the lower-dimensional strategy sets defined as
            \begin{align}
                \Qbar{P_q}&\triangleq\acc{\mathbf{Q}\in\C{r_{q}}{r_{q}} : \, \mathbf{Q}\succeq\mathbf{0},\, \Tr{\mathbf{Q}}\leq P_q},\\
                \Qbareq{P_q}&\triangleq\acc{\mathbf{Q}\in\C{r_{q}}{r_{q}} : \, \mathbf{Q}\succeq\mathbf{0},\, \Tr{\mathbf{Q}} =  P_q}.
            \end{align}
Building on \cite[Appendix C]{scutari_mimo_2009}, defining $\Hb{qr} \triangleq \mathbf{H}_{qr}\mathbf{V}_{r,1}$ $\forall q,r\in\Omega$, $\Ge$ can be translated into $\Ges$, defined as
\begin{eqnarray}
    \pa{\mathcal{G}_E^*}\, :  & \begin{array}{cc}
         \underset{\Qb{q}}{\text{maximize}}& \EEb{q}  \\ 
         \text{subject to} & \Qb{q} \in \Qbar{P_q}
    \end{array}
    & \forall q \in \Omega,
    \label{eq:EEgame_fullrank}
\end{eqnarray}
where $\EEb{q}$ is the EE function in which all channel and strategy matrices are replaced by their lower-dimensional counterpart\footnote{One can check that $\EEb{q}=\EE{q}$.}. In this game reformulation, the matrices $\Hb{qq}$ are either square full rank (if $\mathbf{H}_{qq}$ is rectangular full-row rank or square full rank.) or rectangular full-column rank (in any other cases).

\subsection{Equivalent Pseudo-Game}
Building on the above game reformulation, considering a player $q\in\Omega$, his BR is defined as
\begin{equation}
   \BREEb{q}=  
         \argmax{\Qb{q} \in\Qbar{P_q}} \EEb{q},
   \label{eq:EE_best_response}
\end{equation}
where $\Qb{-q}$ is considered fixed in the above optimization problem. 
To the best of our knowledge, no closed-form solution exists for the above problem. Yet, for a fixed emitted power (namely, $\Tr{\Qb{q}}$ constant),  \eqref{eq:EE_best_response} boils down to the maximization of the rate, whose solution is the well-known waterfilling solution \cite{scutari_mimo_2009}. 
Several approaches \cite{stupia_power_2015,xu_energy_2013} build on this idea to obtain an expression of the BR. From their results, it appears the BR can be obtained in two steps: first, the optimal power is obtained, without taking into account the maximum power constraint. Then, once the power is fixed, the EE maximization boils down to the rate maximization which leads to the well-known waterfilling solution. To that aim, we introduce the following compact EigenValue Decomposition (EVD):
\begin{equation}
    \Intb{q} = \mathbf{U}_q\mathbf{D}_q\mathbf{U}_q^H, \label{eq:invInt_EVD}
\end{equation}
with $\mathbf{D}_q \in \R{r_q}{r_q}$ a positive definite diagonal matrix, $\mathbf{U}_q\in \C{r_q}{r_q}$ a unitary matrix. Formally, we define
\begin{equation}
   \Pub{q} = \Tr{\argmax{\mathbf{0}\preceq \Qb{q}\in \C{r_q}{r_q}} \EEb{q}},
\label{eq:EE_power_unconstrained}
\end{equation}
as the unconstrained power, which can be obtained through the Dinkelbach method \cite{dinkelbach_nonlinear_1967,schaible_fractional_1983,Zappone_EE}, described in \Cref{algo:dinkelbach}. This method converges super-linearly to the solution and has been previously used in \cite{stupia_power_2015} for EE in OFDM systems.  
\begin{algorithm}
\caption{Dinkelbach method }
\begin{algorithmic}
\State Set $i=0$, $\mathbf{0}\neq \Qb{q}^{(0)}\in \C{r_q}{r_q}$, $\epsilon \ll 1$, and $\delta^{(0)}=2\epsilon$.
    \While{$\delta^{(i)}>\epsilon$}
         \State Set 
        $$\nu^{(i+1)}=\frac{\overline{R}_q(\Qb{q}^{(i)},\Qb{-q})}{\Tr{\Qb{q}^{(i)}}+\Psi_q}.$$
        \State Set $$\Qb{q}^{(i+1)} = \mathbf{U}_q\pa{\frac{1}{\nu^{\pa{i+1}}}\mathbf{I}_{r_q}-\mathbf{D}_q^{-1}}^+\mathbf{U}_q^H, $$
        \State \hspace{0.75cm}with $\mathbf{U}_q$ and $\mathbf{D}_q$ defined in \eqref{eq:invInt_EVD}.
        \State Set \begin{align*} \delta^{(i+1)} =\Big|\overline{R}_q(\Qb{q}^{(i+1)},\Qb{-q}) \qquad \qquad \qquad\\\qquad \qquad \qquad \qquad-\nu^{(i+1)}\pa{\Tr{\Qb{q}^{(i+1)}}+\Psi_q}\Big|.
        \end{align*}  
        \State Set $i\longleftarrow i+1$.
    \EndWhile
\State Output $\Tr{\Qb{q}^{(i)}}$.
\label{algo:dinkelbach}
\end{algorithmic}
\end{algorithm}

Once the optimum unconstrained power is obtained, the optimum constrained power is given by \cite{xu_energy_2013}
\begin{align*}
    \PEEb{q}=\min\pa{P_q,\Pub{q}},
\end{align*}
and the BR can be expressed as \cite{scutari_competitive_2008,scutari_mimo_2009}, 
\begin{align}
    \BREEb{q} &= \mathbf{U_q}\pa{\mu_q\mathbf{I}_{r_q}-\mathbf{D}_q^{-1}}^{+}\mathbf{U}_q^H, \nonumber\\& \quad  \quad \quad\text{with}\, \,\mu_q \,\, \st \quad  \Tr{\Qb{q}}=\PEEb{q},
    \label{eq:waterfilling_EE}
\end{align}
with $\mathbf{U}_q$ and $\mathbf{D}_q$ coming from \eqref{eq:invInt_EVD}. Beside providing an efficient way to compute the BR, \eqref{eq:waterfilling_EE} also paves the way to the definition of an equivalent pseudo-game reading as
\begin{align}
    \pa{\Geb}\, :  & \begin{array}{cc}
         \underset{\Qb{q}}{\text{maximize}}& \Rateb{q} \\
         \text{subject to} & \Qb{q} \in \Qbareq{\PEEb{q}}
    \end{array}
      \forall q \in \Omega, 
    \label{eq:EEgame_full}
\end{align}
The difference between the above formulation and a classical game is the fact that the strategy sets of the players also depend on the other strategies\footnote{If each player independently chooses a strategy, then it may happen the strategy does not belong to the strategy sets, hence the name \textit{pseudo-game}. }. Determining the equilibria of the above game is called a Generalized Nash Equilibrium Problem (GNEP), such problems being studied in \cite{facchinei2007generalized}. We stress out that the three game formulations (i.e. $\Ge,\, \Ges, \, \Geb$) are equivalent: therefore, in the following, we will w.l.o.g analyze $\Geb$, its properties extending to $\Ge$.

\subsection{Best Response as a Projection}
The waterfilling expression \eqref{eq:waterfilling_EE} gives a very efficient way to compute the BR of each player. However, the impact of the other strategies is hidden behind the EVD of $\Intb{q}$, which makes the analysis of the competitive interactions difficult. Fortunately, the BR of the players can also be expressed as the projection of a matrix onto the strategy set, as shown in \cite[Lemma 1]{scutari_mimo_2009}: defining $\Xb{q} \triangleq -\invIntb{q}$, the BR can be expressed as\footnote{Note that with regards to the results of \cite{scutari_mimo_2009}, the constant $c_q$ is not needed as the lower-dimensional direct channel matrices are full-column rank.} 
\begin{equation}
    \BREEb{q} = \br{\Xb{q}}_{\Qbareq{\PEEb{q}}}.
    \label{eq:EE_projection}
\end{equation}



\section{Nash Equilibrium Problem}
\label{sec:NEproblem}
The above section enables all players to compute their BR, given the other strategies. In this section, we analyze the Nash Equilibria (NE) of the game, i.e. the fixed-points of the BR mapping\footnote{We only consider the pure NE of the game without loss of generality. Indeed, as the payoff functions are strictly quasi-concave, mixed strategy profiles are never a BR \cite[Theorem 33]{etkin2007spectrum}.}. Formally, defining the aggregate strategy profile, the aggregate maximum power vector and the corresponding strategy set, as well as the optimal power vector and associated strategy set\footnote{For the sake of notations, the dependency of $\hat{\mathbf{p}}$ on $\Qaggb$ is omitted in $\setEEAggb$.} as 
    \begin{eqnarray}
        \Qaggb & \triangleq&  \begin{bmatrix} \Qb{1} &  \hdots & \Qb{Q}
        \end{bmatrix}^H,\\
         \mathbf{p} & \triangleq &\begin{bmatrix} P_1 & \hdots&  P_Q \end{bmatrix}^T, \\
         \fullsetEEAggb  & \triangleq & \bigtimes_{q=1}^Q \Qbar{P_q} ,\\
        \PEEaggb & \triangleq & \begin{bmatrix} \PEEb{1} & \hdots &\PEEb{Q} \end{bmatrix}^T,\\
         \setEEAggb  & \triangleq &\setEEAggbs\, \triangleq \, \bigtimes_{q=1}^Q \Qbareq{\PEEb{q}}  , 
   \end{eqnarray}
   a strategy profile $\QaggbNE \in \fullsetEEAggb$ is a NE iff
    \begin{equation}
     \QbNE{q} = \br{\XbNE{q}}_{\Qbareq{\PEEbNE{q}}}\quad \forall q \in \Omega.
        \label{eq:proj_fixedPoint}
    \end{equation} 
    The above equation states that the strategy profile must be a BR to the other strategies. Moreover, thanks to the projection solution, the BR obviously belongs to the strategy-dependant sets. Such NE can be analyzed in several ways, which include their existence, uniqueness and the design of decentralized algorithms \cite{stupia_power_2015,scutari_competitive_2008,scutari_optimal_2007,scutari_optimal_2008,scutari2009mimo}. These three characteristics are discussed below. 
    \subsection{Existence}
    The existence of the NE follows from the original game definition $\Ge$.
    \begin{proposition}[Existence of NE]
$\Ge$ admits a pure NE.
\end{proposition}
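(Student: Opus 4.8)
The plan is to invoke the classical existence theorem for Nash equilibria in continuous games with quasi-concave payoffs (the Debreu--Glicksberg--Fan theorem, i.e.\ the concave generalization of Nash's theorem). The decisive reason to argue on the original formulation $\Ge$ rather than on $\Ges$ or $\Geb$ is that $\Ge$ is a genuine, non-generalized game: its strategy sets $\Q{P_q}$ are fixed and do not depend on the rivals' choices. This avoids the coupled strategy sets of the pseudo-game $\Geb$, which would otherwise force one into the heavier machinery of generalized-game (GNEP) existence results. So the first thing I would do is check that the standard hypotheses apply verbatim to $\Ge$.

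I would then verify the two structural hypotheses. For the strategy sets: viewing $\Q{P_q}$ as a subset of the real vector space of Hermitian matrices $\He{n_{T_q}}$, it is nonempty (it contains $\mathbf{0}$), convex (the positive semidefinite cone is convex and the constraint $\Tr{\mathbf{Q}}\leq P_q$ is affine), and compact (closed, and the trace bound caps every eigenvalue, hence the Frobenius norm). For joint continuity of each payoff $\EE{q}$: the rate $\Rate{q}$ depends continuously on $\mathbf{Q}_q$ through the $\log\det$ and on $\mathbf{Q}_{-q}$ through $\Ri{q}$, whose inverse is well defined and continuous because $\Ri{q}\succeq\mathbf{R}_{n_q}\succ\mathbf{0}$; the denominator $\Psi_q+\Tr{\mathbf{Q}_q}$ is continuous and strictly positive since $\Psi_q>0$. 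Hence $\EE{q}$ is continuous in the aggregate profile.

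The crux of the argument, and the step I expect to be the main obstacle, is the quasi-concavity of each $\EE{q}$ in the player's own variable $\mathbf{Q}_q$, because of the fractional structure of the objective. Here I would lean on the standard fractional-programming fact that the ratio of a nonnegative concave function to a strictly positive affine function is quasi-concave. The numerator $\Rate{q}$ is concave in $\mathbf{Q}_q$ (concavity of $\log\det$ on the positive semidefinite cone, using $\det(\mathbf{I}+\mathbf{A}\mathbf{Q})=\det(\mathbf{I}+\mathbf{A}^{1/2}\mathbf{Q}\mathbf{A}^{1/2})$) and nonnegative, while the denominator is affine and strictly positive on $\Q{P_q}$. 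This delivers the quasi-concavity already asserted in the footnote preceding the statement, and is really the only place where the specific form of the EE payoff is used.

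With the three hypotheses established, the Debreu--Glicksberg--Fan theorem yields a nonempty set of fixed points of the joint best-response correspondence, i.e.\ a pure NE of $\Ge$. Finally, since $\Ge$, $\Ges$ and $\Geb$ were shown to be equivalent, this existence transfers automatically to the reformulated games, which is what the subsequent analysis will exploit.
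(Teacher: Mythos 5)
Your proposal is correct and follows essentially the same route as the paper: both invoke the classical existence theorem for games with nonempty compact convex strategy sets and continuous quasi-concave payoffs, applied directly to $\Ge$. The only difference is that you spell out the compactness/continuity checks and derive the quasi-concavity of $\EE{q}$ from the concave-over-positive-affine fractional-programming argument, whereas the paper simply cites these facts to the literature.
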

\begin{proof} 
From \cite[Theorem 1.2]{existencepureNE}, a game admits a pure NE if the strategy sets are non-empty compact convex and if the payoff functions are continuous quasi-concave. The sets $\Qeq{P_q}$ satisfying these properties and the quasi-concavity of the objective functions being proven in \cite{xu_energy_2013}, the existence of pure NE follows.
\end{proof}
\subsection{Uniqueness}
Several frameworks can be considered for the analysis of the uniqueness of the NE. As considered in \cite{scutari2010convex,pang_design_2010}, one can derive (Quasi)-Variational Inequalities (see \cite{facchinei_finite-dimensional_2004}) from the optimality conditions characterizing the BR of the users. Another lead, investigated in   \cite{scutari_competitive_2008,miao2011distributed,ren2010distributed,BacciEECompetitive,etkin2007spectrum,scutari_optimal_2008} is to derive conditions under which the BR mapping is a contraction, as in that case it can be proven a unique fixed-point exists. The uniqueness conditions obtained through the two frameworks remarkably coincide when users optimize their rate. Determining whether the two frameworks can lead to the same uniqueness conditions for the EE game remains however an open question which is dealt with in the following by considering the two approaches. Moreover, we focus on the case of square full rank direct channel matrices, while the extension to general matrices is discussed in \Cref{sub:fullcolrank}.
~~\\
\subsubsection{Quasi-Variational Inequality Framework}

Such a framework has been used in \cite{stupia_power_2015} for an OFDM network. Following its lead, we reformulate the problem of finding a NE into an equivalent inequality. Yet, differently from \cite{stupia_power_2015} where the optimality conditions are derived from the original rate function, the QVI we consider here builds on the interpretation of the BR as a projection, this leading to a significant improvement of the uniqueness conditions. Defining
 \begin{equation}
        \Faggb \triangleq \begin{bmatrix} \Fb{1} \\ \vdots \\ \Fb{Q}\end{bmatrix}\triangleq \begin{bmatrix} \Qb{1}-\Xb{1} \\ \vdots \\ \Qb{Q}-\Xb{Q}\end{bmatrix}, 
    \end{equation}
    the following proposition gives a global characterization of the NE through the inner product of Hermitian matrices, i.e. the trace of their product.
\begin{proposition} [QVI reformulation] A strategy profile $\QaggbNE \in \setEEAggbNE$ is a NE of $\Geb$ iff it respects the following quasi-variational inequality $\QVI$:
   \begin{equation}
        \Tr{\FaggbNEH\pa{\Qaggb^{'}-\QaggbNE}}\geq 0 \quad \forall \Qaggb^{'} \in \setEEAggbNE .
        \label{eq:fullQVI}
    \end{equation}
    \end{proposition}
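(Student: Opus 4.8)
The plan is to reduce the aggregate quasi-variational inequality to the per-player projection characterization of the best response, exploiting the Cartesian-product structure of the strategy set. The only tool required is the variational characterization of the Frobenius projection onto a closed convex set: for a nonempty closed convex $\mathcal{K}$ in the real Hilbert space of Hermitian matrices equipped with the inner product $\langle \mathbf{A},\mathbf{B}\rangle=\Tr{\mathbf{A}^H\mathbf{B}}$, a point $z$ equals $\br{x}_{\mathcal{K}}$ if and only if $z\in\mathcal{K}$ and $\Tr{(x-z)^H(y-z)}\leq 0$ for all $y\in\mathcal{K}$. Since $\Xb{q}$ (being minus the inverse of a Hermitian matrix) and every element of $\Qbareq{\PEEb{q}}$ are Hermitian, all the matrix differences below are Hermitian and the trace of their products is real; this is what lets us apply the characterization with the inner product written exactly as in the statement, without a real-part operator.

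First I would establish the forward direction. Assume $\QaggbNE$ is a NE; by the fixed-point condition \eqref{eq:proj_fixedPoint}, for every $q\in\Omega$ we have $\QbNE{q}=\br{\XbNE{q}}_{\Qbareq{\PEEbNE{q}}}$. Applying the projection characterization to each player with $x=\XbNE{q}$, $z=\QbNE{q}$ and the fixed closed convex set $\Qbareq{\PEEbNE{q}}$ yields $\Tr{(\XbNE{q}-\QbNE{q})^H(\Qb{q}^{'}-\QbNE{q})}\leq 0$ for all $\Qb{q}^{'}\in\Qbareq{\PEEbNE{q}}$. Substituting $\FbNE{q}=\QbNE{q}-\XbNE{q}$ flips the sign and gives $\Tr{\FbNE{q}^H(\Qb{q}^{'}-\QbNE{q})}\geq 0$ for each $q$. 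Summing these $Q$ inequalities and using that the aggregate inner product on the product space decomposes as $\Tr{\FaggbNEH(\Qaggb^{'}-\QaggbNE)}=\sum_{q=1}^{Q}\Tr{\FbNE{q}^H(\Qb{q}^{'}-\QbNE{q})}$ produces exactly \eqref{eq:fullQVI}.

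For the converse I would disaggregate. Suppose $\QaggbNE$ satisfies \eqref{eq:fullQVI}. Fix a player $q$ and choose a test profile $\Qaggb^{'}\in\setEEAggbNE$ that coincides with $\QaggbNE$ in every block $r\neq q$ and is arbitrary in block $q$; this profile is admissible precisely because $\setEEAggbNE=\bigtimes_{q=1}^{Q}\Qbareq{\PEEbNE{q}}$ is a Cartesian product, so a single block may be varied freely while the others stay at their NE values. For such a profile all summands but the $q$th vanish, and \eqref{eq:fullQVI} collapses to $\Tr{\FbNE{q}^H(\Qb{q}^{'}-\QbNE{q})}\geq 0$ for all $\Qb{q}^{'}\in\Qbareq{\PEEbNE{q}}$. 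Reading this back through the projection characterization shows $\QbNE{q}=\br{\XbNE{q}}_{\Qbareq{\PEEbNE{q}}}$, which is \eqref{eq:proj_fixedPoint}; since $q$ was arbitrary, $\QaggbNE$ is a NE.

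The step that needs the most care is the \emph{quasi} aspect rather than any computation: the feasible set $\setEEAggbNE$ itself depends on $\QaggbNE$ through the power levels $\PEEbNE{q}$. What makes the argument go through is that, once the candidate NE is fixed, these power levels are fixed numbers, so each $\Qbareq{\PEEbNE{q}}$ is a genuine closed convex set and the projection theorem applies verbatim; the equilibrium dependence lives entirely in the definition of the set against which the inequality is tested, consistently on both sides of \eqref{eq:fullQVI}. The only other point I would verify cleanly is the reality of the trace inner product on Hermitian matrices, justifying the inner product notation used in the statement.
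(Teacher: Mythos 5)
Your proposal is correct and follows essentially the same route as the paper: the per-player inequality is obtained from the variational characterization of the Frobenius projection (the paper phrases this as the minimum principle applied to the argmin form of \eqref{eq:EE_projection}), the forward direction sums the $Q$ individual inequalities, and the converse restricts the test profile to vary a single block of the Cartesian product. Your added remarks on the reality of the trace inner product for Hermitian matrices and on why the quasi-dependence of the set is harmless once the candidate equilibrium is fixed are sound and consistent with the paper's (terser) argument.
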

    \begin{proof} 
    From \eqref{eq:EE_projection}, the BR of the $q$th player arises from the following optimization problem:
    \begin{align}
        \BREEb{q} = \argmin{\mathbf{Q}\in\Qbareq{\PEEb{q}}} \frac{1}{2}\norm{\mathbf{Q}-\Xb{q}}{F}^2.
    \end{align}
    From the so-called minimum principle \cite[Section 4.2.3]{boyd_convex_2004}, a matrix $\Qb{q}$ is hence a BR iff  
    \begin{align}
        \Tr{\Fb{q}\pa{\Qb{q}^{'}-\Qb{q}}}\geq 0 \quad \forall \Qb{q}^{'} \in \Qbareq{\PEEb{q}}.
        \label{eq:QVI_indiv}
    \end{align}
    From the definition of $\Faggb$, as all $\Fb{q}$ are Hermitian, summing the $Q$ individual QVIs, \eqref{eq:fullQVI} is obtained. In the reverse direction, letting $ \Qb{r}^{'} = \Qb{r}^{\text{NE}}$ $\forall r \neq q$, \eqref{eq:QVI_indiv} is obtained from \eqref{eq:fullQVI}, and therefore the equivalence follows.
    \end{proof}
When the direct channel matrices are square full rank (and thus invertible), from the definition of $\Xb{q}$ and $\Rmb{q}$, $\Fb{q}$ can be simplified into a linear mapping as 
\begin{align}
    \Fb{q} &= 
    \Hb{qq}^{-1}\mathbf{R}_{n_q}\Hb{qq}^{-H} + 
     \sum_{r=1}^Q\Hb{qq}^{-1}\Hb{qr}\Qb{r}\Hb{qr}^H\Hb{qq}^{-H}.
\end{align}
This linearity intuitively makes the analysis easier. The properties of this QVI mapping $\Faggb$ and the set-valued function $\setEEAggbs$ are essential to the understanding of the behaviour of the QVI, and are analyzed in Appendix \ref{app:QVI_properties}. Building on these properties (Lipschitz continuity and strong monotonicity of $\Faggb$, as well as bounded variation rate of $\setEEAggbs$), we obtain conditions on the uniqueness of the NE of $\Ge$, as stated in \Cref{prop:EEgame_unicty_squarefullrank}. These conditions depend on the nonnegative interference matrix $\Sc$, defined as \begin{equation}
            \br{\Sc}_{qr} \triangleq\left\{\begin{array}{ll}
                  \sigmsq{\Hb{qq}^{-1}\Hb{qr}} & \text{if } r\neq q,   \\
                 0 &\text{otherwise}.
            \end{array}\right. 
                 \label{eq:Sdefinition}
        \end{equation}
Intuitively, the element $\br{\Sc}_{qr}$ will be large if the interference from the $r$-th emitter to the $q$-th receiver is significant with regards to the power of the direct link (between the $q$-th emitter and the same receiver).
\begin{proposition}[NE uniqueness for square nonsingular channel matrices - QVI analysis] If all the channel matrices $\Hb{qq}$ are square full rank, and if the power mapping verifies
\begin{equation}
    \norm{\PEEaggb - \PEEaggbp}{2} < \frac{1-\sr{\mathbf{\Sc}^{s}}}{\sigma_{\max}\pa{\mathbf{I}_Q+\mathbf{\Sc}}} \norm{\Qaggb-\Qaggb^{'}}{F}, 
    \label{eq:EEgame_unicty_squarefullrank}
\end{equation}
$\forall \Qaggb, \Qaggb^{'} \in \fullsetEEAggb$, with $\Sc$ defined in \eqref{eq:Sdefinition}, then $\Ge$ admits a unique NE.
\label{prop:EEgame_unicty_squarefullrank}
\end{proposition}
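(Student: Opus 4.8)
The plan is to combine the QVI characterization \eqref{eq:fullQVI} with the structural properties established in \Cref{app:QVI_properties}. Three facts are needed: that the affine mapping $\Faggb$ is Lipschitz continuous with constant $L=\sigma_{\max}\pa{\mathbf{I}_Q+\Sc}$; that it is strongly monotone with modulus $c_{sm}=1-\sr{\Sc^{s}}$; and that the set-valued map $\setEEAggbs$ moves no faster than its power profile, i.e. the Frobenius distance from a point feasible for one strategy profile to the feasible set induced by another is at most $\norm{\PEEaggb-\PEEaggbp}{2}$. The right-hand side of \eqref{eq:EEgame_unicty_squarefullrank} is positive exactly when $\sr{\Sc^{s}}<1$, which is precisely the regime in which $\Faggb$ is strongly monotone, so this is the operating assumption. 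Since $\Ge$, $\Ges$ and $\Geb$ are equivalent, it suffices to prove uniqueness for $\Geb$.

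I would then argue by contradiction. Suppose $\Geb$ has two distinct equilibria $\Qaggb^{(1)}\neq\Qaggb^{(2)}$, each satisfying \eqref{eq:fullQVI} on its own feasible set $\partial\overline{\mathcal{Q}}\pa{\Qaggb^{(1)}}$, resp. $\partial\overline{\mathcal{Q}}\pa{\Qaggb^{(2)}}$. The obstruction specific to a QVI, as opposed to a plain VI, is that $\Qaggb^{(2)}$ is generally infeasible for the inequality written at $\Qaggb^{(1)}$, so it cannot be used as a test point directly. I would feed each inequality the projection of the other equilibrium onto the relevant feasible set; by the set-variation bound above, the projection error is controlled by $\norm{\hat{\mathbf{p}}\pa{\Qaggb^{(1)}}-\hat{\mathbf{p}}\pa{\Qaggb^{(2)}}}{2}$. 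Adding the two inequalities, the diagonal contribution equals $-\Tr{\pa{\overline{\mathbf{F}}\pa{\Qaggb^{(1)}}-\overline{\mathbf{F}}\pa{\Qaggb^{(2)}}}^{H}\pa{\Qaggb^{(1)}-\Qaggb^{(2)}}}$, which strong monotonicity bounds above by $-c_{sm}\norm{\Qaggb^{(1)}-\Qaggb^{(2)}}{F}^{2}$, while the residual cross terms are controlled, through Lipschitz continuity, by $L$ times the projection error. Rearranging gives $\norm{\hat{\mathbf{p}}\pa{\Qaggb^{(1)}}-\hat{\mathbf{p}}\pa{\Qaggb^{(2)}}}{2}\geq\tfrac{c_{sm}}{L}\norm{\Qaggb^{(1)}-\Qaggb^{(2)}}{F}$, which contradicts \eqref{eq:EEgame_unicty_squarefullrank} because $\tfrac{c_{sm}}{L}=\tfrac{1-\sr{\Sc^{s}}}{\sigma_{\max}\pa{\mathbf{I}_Q+\Sc}}$. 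Hence $\Qaggb^{(1)}=\Qaggb^{(2)}$, uniqueness holds for $\Geb$, and therefore for $\Ge$.

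The hard part is the final bounding step: controlling the interaction between the operator $\Faggb$ and the moving feasible set sharply enough to obtain the threshold $\tfrac{c_{sm}}{L}$ rather than a strictly smaller one. A generic fixed-point argument based on the nonexpansiveness of the projection in the natural map $\Qaggb\mapsto\br{\Qaggb-\gamma\,\Faggb}_{\setEEAggbs}$ only yields the weaker modulus $1-\sqrt{1-c_{sm}^{2}/L^{2}}$, so the special structure must be exploited: the affine form of $\Faggb$ (block-diagonal identity part, off-diagonal blocks majorised by $\Sc$) together with the trace-equality-plus-positive-semidefinite geometry of $\Qbareq{\cdot}$, which forces the feasible set to move only along directions whose projection error is exactly $\norm{\PEEaggb-\PEEaggbp}{2}$ and to couple with $\Faggb$ only through $L$. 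Pinning down the three constants $c_{sm}$, $L$ and the unit set-variation rate with these sharp values is precisely the role of \Cref{app:QVI_properties}, on which the argument rests.
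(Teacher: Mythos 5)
Your architecture matches the paper's: both routes rest on the same three ingredients — Lipschitz continuity of $\Faggb$ with $L=\sigma_{\max}\pa{\mathbf{I}_Q+\Sc}$ (\Cref{lemma:Lipschitz}), strong monotonicity with $\mu=1-\sr{\Sc^{s}}$ (\Cref{lemma:monoton}), and the unit variation rate of the feasible set relative to the power profile (\Cref{lemma:smoothpower}) — and both conclude via a QVI uniqueness principle with threshold $\mu/L$. The difference is that the paper simply invokes \cite{nesterov_solving_2007} for that principle (plus an appendix transporting it from real vector spaces to Hermitian matrices, which your from-scratch route in the trace inner product would legitimately sidestep), whereas you attempt to re-derive it. You also correctly observe that the naive projection-contraction argument only buys the weaker threshold $1-\sqrt{1-\mu^{2}/L^{2}}$; that is a genuine and worthwhile remark.

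The gap is in the re-derivation, at exactly the step you flag as ``the hard part.'' Writing $\mathbf{z}_1$ for the projection of $\Qaggb^{(2)}$ onto $\setEEAggb[\Qaggb^{(1)}]$'s feasible set and $\mathbf{z}_2$ symmetrically, summing the two QVIs and isolating the monotone part gives
\begin{equation*}
\mu\,\norm{\Qaggb^{(1)}-\Qaggb^{(2)}}{F}^{2}\;\leq\;\Tr{\Fbo{}^{H}\pa{\mathbf{z}_1-\Qaggb^{(2)}}}+\Tr{\Fbt{}^{H}\pa{\mathbf{z}_2-\Qaggb^{(1)}}},
\end{equation*}
and Cauchy--Schwarz bounds the right-hand side by $\bigl(\|\Fbo{}\|_F+\|\Fbt{}\|_F\bigr)\,\delta$ with $\delta$ the projection error from \Cref{lemma:smoothpower} --- \emph{not} by $L\,\delta\,\norm{\Qaggb^{(1)}-\Qaggb^{(2)}}{F}$ as your rearrangement requires. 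Lipschitz continuity controls the \emph{difference} $\Fbo{}-\Fbt{}$, not the magnitudes $\|\Fbo{}\|_F$, $\|\Fbt{}\|_F$, which stay bounded away from zero even as the two equilibria approach each other (take $\Qaggb^{(1)}=\Qaggb^{(2)}$: your claimed bound would force the cross terms to vanish, the true bound does not). So the inequality $\norm{\PEEaggb^{(1)}-\PEEaggb^{(2)}}{2}\geq(\mu/L)\norm{\Qaggb^{(1)}-\Qaggb^{(2)}}{F}$ does not follow from the summation argument as sketched; obtaining the $\mu/L$ threshold requires the finer mechanism of the cited theorem (a Lipschitz perturbation bound for the solution map of the parametric VI composed with the set-variation rate), which neither your sketch nor Appendix~\ref{app:QVI_properties} supplies. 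Appendix~\ref{app:QVI_properties} only pins down the three constants; the combination step is precisely what the paper outsources to \cite{nesterov_solving_2007} and what your proposal leaves unproved.
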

\begin{proof}
The proof follows from the QVI uniqueness conditions of \cite[Theorem 4 and Corollary 2]{nesterov_solving_2007}, corresponding to the three above lemmas. These conditions being derived in case of a real vector spaces, we prove in Appendix \ref{app:exNesterov} they also hold for complex Hermitian matrices. 
\end{proof}

The above proposition is made off two conditions:
\begin{itemize}
    \item  First, $\sr{\mathbf{\Sc}^{s}}$ must be smaller than 1, which is reminiscent of the uniqueness conditions in case of users maximizing their rate \cite{scutari_competitive_2008}, at the difference that in their case the spectral radius of the non-symmetric matrix is considered. This condition limits the interference caused by and received from the other users \cite[Corollary 11]{scutari_mimo_2009}. This interference limit can also be understood in light of the fact that if nonnegative matrices $\mathbf{A},\mathbf{B}$ are such that $\mathbf{A}\geq \mathbf{B}$, then $\sr{\mathbf{A}}\geq\sr{\mathbf{B}}$ \cite[Fact 4.11.18]{bernstein_dennis_s._matrix_2009}. Hence, lower interference levels (i.e. smaller matrix elements) indeed reduce the spectral radius of $\Sc$.
    \item Secondly, the power mapping needs to vary smoothly with regards to the strategy of the other players. This requirement directly comes from the pseudo-game reformulation \eqref{eq:EEgame_full}: this is one of the advantage of this approach. By reformulating the complex EE problem into this pseudo-game, the convergence conditions are decoupled, leading to interpretable convergence conditions. Yet, it must be kept in mind that this second condition is difficult to check, its simplification constituting an interesting direction for future work. 
\end{itemize}
To summarize, the two conditions can be understood intuitively: when the interference level is low, then the games of the different users are nearly decoupled and therefore one unique equilibrium exist due to the fact that each user has a unique BR.

The above approach is similar to the one of \cite{stupia_power_2015}, where an EE game in an OFDM network is analyzed. Nevertheless, one major difference lies in the formulation of the QVI problem: in our case, the QVI has been derived from the projection solution while in \cite{stupia_power_2015}, the rate objective \eqref{eq:EEgame_full} was considered. This enabled us to obtain a linear QVI while a nonlinear one is analyzed in \cite{stupia_power_2015}, which in turn leads to much tighter uniqueness conditions. Specifically, considering the strong monotonicity condition, \Cref{fig:criteria_comparison} compares our condition particularized to an OFDM network (in green) to the one of \cite{stupia_power_2015} (in blue), as well as the condition obtained in \cite{scutari_competitive_2008} for the rate objective (in orange): $\sr{\Sc}<1$. The curves represent the probability of a satisfied criterion for different SNRs and SIRs, averaged on 2000 i.i.d. realizations: for each SNR and SIR, $2000$ OFDM channel matrices have been drawn from complex normal distributions scaled so as to obtain the targeted average SNR and SIR. The figure clearly shows the benefit of obtaining a linear QVI, as our conditions are much more often satisfied\footnote{Writing the element of $\Sc$ in case of diagonal channel matrices (this corresponding to OFDM networks), one can compare and show that our conditions are strictly tighter than those of \cite{stupia_power_2015}.}. Furthermore, one can also notice the loss of performance due to the symmetrization of $\Sc$, which is discussed below. Finally, the condition obtained in \cite{pan2014totally} for EE in MIMO networks can be seen as a condition on the interference receiver by each user, while ours considers the game in its globality. Yet, differently from us, their condition does not depend anymore on the power mapping and is therefore easier to verify.   \begin{figure}
    \centering
    \includegraphics[width = 0.8\columnwidth]{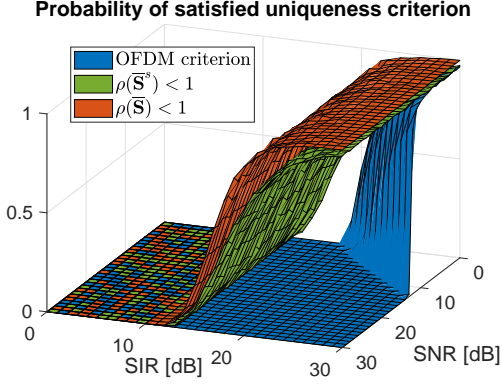}
    \caption{Success probability of the different criteria as a function of the SNR and SIR, averaged on $2000$ i.i.d. realizations. In this graph, higher is better.}
    \label{fig:criteria_comparison}
\end{figure}
~~\\
\subsubsection{Contraction Framework}
In order to analyze the uniqueness problem, the framework of contractive mappings can also be considered, as it has been successfully used in case of users maximizing their rate \cite{scutari_mimo_2009}.  This framework is based on the property that contractive mappings admit a unique fixed-point \cite[Prop. 1.1.a]{bertsekas_parallel_1989}. To that aim, defining the BR mapping as
\begin{equation}
    \BREEAggb = \begin{bmatrix} \BREEb{1} \\ \vdots \\\BREEb{Q} \end{bmatrix},
    \label{eq:def_BREE}
\end{equation}
it appears $\Geb$ admits a unique NE if $\forall \Qaggb, \Qaggb^{'} \in \fullsetEEAggb$,
\begin{align}
    \norm{\BREEAggb - \BREEAggbp}{} \leq \beta \norm{\Qaggb-\Qaggb^{'}}{},
    \label{eq:contraction_EE}
\end{align}
with $\beta <1$ for some choice of norm. Expressing the BR as a projection, the above equation can also be expressed as 
\begin{equation}
    \norm{\br{\Xaggb}_{\setEEAggb} - \br{\Xaggbp}_{\setEEAggbp}}{} \leq \beta \norm{\Qaggb-\Qaggb^{'}}{}.
    \label{eq:contraction_EE_v2}
\end{equation}
Graphically, this condition is depicted in \Cref{fig:contraction}. In this figure, the green (resp. orange) solid line corresponds to $\partial\overline{\mathcal{Q}}(\Qaggb)$ (resp. $\partial\overline{\mathcal{Q}}(\Qaggb^{'})$), these sets being characterized by a different admissible power. The BRs are denoted by the green and orange dots on these lines, corresponding to the projection of $\overline{\mathbf{X}}(\Qaggb)$ and $\overline{\mathbf{X}}(\Qaggb^{'})$. The contraction property of a mapping forces the distance between these BR, denoted by a black line, to be smaller than the distance between the original strategies. This distance depends both on the projected matrices and on the admissible power of each set. As for the QVI analysis, these two effects can be decoupled by considering the projection of $\overline{\mathbf{X}}(\Qaggb)$ onto $\partial\overline{\mathcal{Q}}(\Qaggb^{'})$, represented with a blue point in the figure. Specifically, using the triangular inequality, the black distance can be bounded by the sum of a distance depending only on the projected matrices and of a distance depending only on the transmit powers.
\begin{figure}
    \centering
    \scalebox{1}{
    \begin{tikzpicture}

\definecolor{blue1}{rgb}{0.09, 0.5, 0.7}
\definecolor{blue2}{rgb}{0.05, 0.4, 0.7}

\definecolor{blue3}{rgb}{0.00000,0.44700,0.74100}%
\definecolor{orange_scal}{rgb}{0.85000,0.32500,0.09800}%
\definecolor{forestgreen}{rgb}{0.46600,0.67400,0.18800}%
\definecolor{blue0}{rgb}{0,0,0}%
    \tikzmath{\u=1;}
    \tikzmath{\widthaxis=2.5*\u;\width = 2.3*\u; \widthtwo = 1.8*\u; \widthc = 0.2*\u;}
    \draw (-\widthaxis,0) -- (\widthaxis,0);
    \draw (0,-\widthaxis) -- (0,\widthaxis);

    \draw [ultra thick,orange_scal] (0, \width) -- (\width,0);
     \draw [ultra thick,forestgreen] (0, \widthtwo) -- (\widthtwo,0);
     \node[] at (-0.7*\u,\width) {\textcolor{orange_scal}{\normalsize{\textbf{$\PEEaggbp$}}}};
     \node[] at (-0.7*\u,\widthtwo-0.2*\u) {\textcolor{forestgreen}{\normalsize{\textbf{$\PEEaggb$}}}};

\draw[draw = forestgreen, fill = forestgreen] (-1.5*\u,-2*\u)  circle (\widthc/2);
\node[] at (-2.2*\u,-1.9*\u) {\textcolor{forestgreen}{\normalsize{\textbf{$\Xaggb$}}}};
\draw[draw = orange_scal, fill = orange_scal] (-0.8*\u,-0.5*\u)  circle (\widthc/2);
\node[] at (-1.5*\u,-0.5*\u) {\textcolor{orange_scal}{\normalsize{\textbf{$\Xaggbp$}}}};

\tikzmath{\ponex = 1.15*\u;\poney=\widthtwo-\ponex;}
\draw[draw = forestgreen, fill = forestgreen] (\ponex,\poney)  circle (\widthc/2);
\draw [ultra thick,forestgreen, dashed] (-1.5*\u,-2*\u) -- (\ponex,\poney);
\tikzmath{\ptwox =0.97*\u;\ptwoy=\width-\ptwox;}
\draw[draw = orange_scal, fill = orange_scal] (\ptwox,\ptwoy) circle (\widthc/2);
\draw [ultra thick,orange_scal, dashed](-0.8*\u,-0.5*\u) -- (\ptwox,\ptwoy);

\node[] at (\ponex-0.8*\u,\poney) {\textcolor{forestgreen}{\normalsize{$\BREEAggb$}}};
\node[] at (\ptwox+0.8*\u,\ptwoy+0.1*\u) {\textcolor{orange_scal}{\normalsize{$\BREEAggbp$}}};

\draw [ultra thick,blue0] (\ponex,\poney) -- (\ptwox,\ptwoy);

\tikzmath{\pthreex =1.41*\u;\pthreey=\width-\pthreex;}
\draw[draw = blue3, fill = blue3] (\pthreex,\pthreey) circle (\widthc/2);
\node[] at (\pthreex+1.3*\u,\pthreey-0.1*\u) {\textcolor{blue3}{\normalsize{\textbf{$\br{\Xaggb}_{\setEEAggbp}$}}}};
\draw [ultra thick,blue3, dashed](\pthreex,\pthreey) -- (\ponex,\poney);


\end{tikzpicture}}
    \caption{Graphical representation of \eqref{eq:contraction_EE_v2}.}
    \label{fig:contraction}
\end{figure}
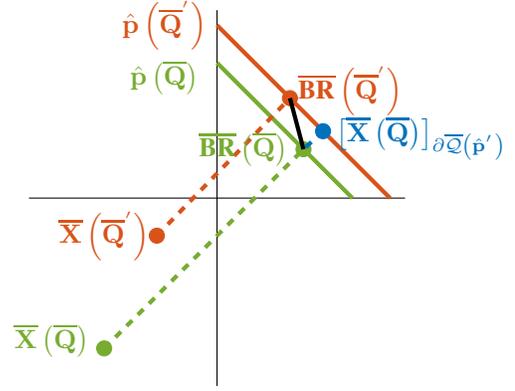

Before obtaining the uniqueness results, it should be noted that the way distances are measured can be chosen to get as tight conditions as possible. To that aim, similarly to \cite{scutari_competitive_2008}, the weighted block-maximum norm and the corresponding 2-norm for vectors are introduced as 
\begin{equation}
    \normw{\Qaggb}{F} = \max_{q\in \Omega}\frac{\norm{\Qb{q}}{F}}{w_q},\quad \normw{\mathbf{u}}{\infty} = \max_{q\in \Omega}\frac{\abs{u_q}}{w_q},
\end{equation}

for some choice of strictly positive $\mathbf{w}$. 
\begin{proposition}[NE uniqueness for square nonsingular channel matrices - contraction analysis]
If all the channel matrices $\Hb{qq}$ are square full rank, and if the power mapping verifies  
\begin{align}
    \normw{\PEEaggb - \PEEaggbp}{\infty} < \pa{1-\sr{\mathbf{\Sc}}} \normw{\Qaggb-\Qaggb^{'}}{F},
    \label{eq:contraction_cond}
    \end{align}
$\forall \Qaggb, \Qaggb^{'} \in \fullsetEEAggb$, with $\Sc$ being defined in \eqref{eq:Sdefinition} and $\mathbf{w}$ its right Perron eigenvector, then $\Ge$ admits a unique NE.
     \label{prop:EE_unicity_general_contraction}
\end{proposition}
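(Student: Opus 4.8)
The plan is to prove that the best-response mapping $\BREEAggb$ of \eqref{eq:def_BREE} is a contraction in the weighted block-maximum norm $\normw{\cdot}{F}$, so that a unique NE follows from the Banach fixed-point theorem \cite[Prop. 1.1.a]{bertsekas_parallel_1989} applied to \eqref{eq:contraction_EE_v2}. Writing each side as a projection and inserting the intermediate matrix $\br{\Xaggb}_{\setEEAggbps}$ (the projection of $\Xaggb$ onto the constraint set induced by the \emph{other} strategy profile, i.e. the blue point of \Cref{fig:contraction}), the triangle inequality splits the BR increment into two contributions,
\begin{align*}
    \normw{\BREEAggb - \BREEAggbp}{F} &\leq \normw{\br{\Xaggb}_{\setEEAggbs} - \br{\Xaggb}_{\setEEAggbps}}{F} \\ &\quad + \normw{\br{\Xaggb}_{\setEEAggbps} - \br{\Xaggbp}_{\setEEAggbps}}{F},
\end{align*}
the first isolating the motion of the constraint set (the power mapping) and the second the motion of the projected matrices. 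I would bound the two terms separately.

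For the second term, both arguments are projected onto the \emph{same} closed convex set $\setEEAggbps$, which is the Cartesian product of the per-player sets $\Qbareq{\PEEbp{q}}$; the projection therefore decouples across players and is nonexpansive in the Frobenius norm, giving $\normw{\br{\Xaggb}_{\setEEAggbps} - \br{\Xaggbp}_{\setEEAggbps}}{F} \leq \normw{\Xaggb - \Xaggbp}{F}$. For square full rank $\Hb{qq}$ the block map is affine,
\[
    \Xb{q} = -\Hb{qq}^{-1}\pa{\mathbf{R}_{n_q} + \sum_{r\neq q}\Hb{qr}\Qb{r}\Hb{qr}^H}\Hb{qq}^{-H},
\]
so its increment is $\Xb{q} - \overline{\mathbf{X}}_q\pa{\Qb{-q}^{'}} = -\sum_{r\neq q}\pa{\Hb{qq}^{-1}\Hb{qr}}\pa{\Qb{r}-\Qb{r}^{'}}\pa{\Hb{qq}^{-1}\Hb{qr}}^H$. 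Applying $\norm{\mathbf{A}\mathbf{M}\mathbf{A}^H}{F}\leq\sigmsq{\mathbf{A}}\norm{\mathbf{M}}{F}$ and recognizing \eqref{eq:Sdefinition} bounds each block by $\sum_{r\neq q}\br{\Sc}_{qr}\norm{\Qb{r}-\Qb{r}^{'}}{F}$; dividing by $w_q$, using $\norm{\Qb{r}-\Qb{r}^{'}}{F}\leq w_r\normw{\Qaggb-\Qaggb^{'}}{F}$, and invoking $\Sc\mathbf{w}=\sr{\Sc}\mathbf{w}$ yields $\normw{\Xaggb-\Xaggbp}{F}\leq\sr{\Sc}\normw{\Qaggb-\Qaggb^{'}}{F}$. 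The choice of $\mathbf{w}$ as the Perron eigenvector is exactly what makes this weighted bound collapse to $\sr{\Sc}$, mirroring the rate-game analysis.

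The first term is where I expect the main obstacle. It measures how far the projection of one fixed matrix moves when only the trace level of the constraint changes from $\PEEbp{q}$ to $\PEEb{q}$. My plan is to establish the per-player Lipschitz-in-power estimate $\norm{\br{\Xb{q}}_{\Qbareq{P}} - \br{\Xb{q}}_{\Qbareq{P'}}}{F}\leq\abs{P-P'}$, whence, dividing by $w_q$ and maximizing, $\normw{\br{\Xaggb}_{\setEEAggbs} - \br{\Xaggb}_{\setEEAggbps}}{F}\leq\normw{\PEEaggb-\PEEaggbp}{\infty}$. Since both projections act on the same matrix they share its eigenbasis, and the projection has the waterfilling form $\pa{\mu\mathbf{I}-\mathbf{D}_q^{-1}}^{+}$ with $\mu$ fixed by the trace; the estimate reduces to showing that, as the water level is raised to absorb the extra power, the Frobenius-norm change of the active eigenvalues never exceeds the power increment. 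The technical care lies in handling the changes of the active set as modes cross zero, for which a continuity/path argument gives $\abs{dQ/dP}=1/\sqrt{k}\leq 1$ with $k$ the number of active modes.

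Combining the two bounds in the decomposition gives
\[
    \normw{\BREEAggb - \BREEAggbp}{F} \leq \normw{\PEEaggb-\PEEaggbp}{\infty} + \sr{\Sc}\normw{\Qaggb-\Qaggb^{'}}{F}.
\]
Substituting the hypothesis \eqref{eq:contraction_cond} bounds the right-hand side strictly above by $\pa{1-\sr{\Sc}}\normw{\Qaggb-\Qaggb^{'}}{F}+\sr{\Sc}\normw{\Qaggb-\Qaggb^{'}}{F}=\normw{\Qaggb-\Qaggb^{'}}{F}$, so $\BREEAggb$ satisfies \eqref{eq:contraction_EE} with modulus $\beta<1$ and the NE is unique.
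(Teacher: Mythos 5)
Your proposal is correct and follows essentially the same route as the paper: the same triangle-inequality split through the intermediate projection $\br{\Xaggb}_{\setEEAggbps}$, a bound of $\normw{\PEEaggb-\PEEaggbp}{\infty}$ on the set-motion term (the paper's \Cref{lemma:smoothpower}, which you re-derive via a water-level path argument instead of the paper's $\ell_2\le\ell_1$ telescoping), and the $\sr{\Sc}\normw{\Qaggb-\Qaggb^{'}}{F}$ bound on the matrix-motion term (which the paper imports from \cite[Theorem 7]{scutari_mimo_2009} and you derive inline from nonexpansiveness, the affine form of $\Xb{q}$, and the Perron eigenvector). The only caveat is that your treatment of the active-set changes in the first term is a sketch rather than a complete argument, but the idea is sound and matches the paper's lemma.
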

\begin{proof}
From the triangular inequality, building on the above description of \Cref{fig:contraction},
\begin{align}
    & \normw{\br{\Xaggb}_{\setEEAggb} - \br{\Xaggbp}_{\setEEAggbp}}{F} \nonumber \\
    & \leq \normw{\br{\Xaggb}_{\setEEAggb} - \br{\Xaggb}_{\setEEAggbp}}{F} \nonumber\\ & \qquad  + \normw{\br{\Xaggb}_{\setEEAggbp}- \br{\Xaggbp}_{\setEEAggbp}}{F}. 
    \label{eq:EE_sum_dist}
\end{align}
In the above equation, the first term is the projection of the same matrix onto two different sets which, from \Cref{lemma:smoothpower}, can be bounded as
\begin{align}
    &\normw{\br{\Xaggb}_{\setEEAggb}- \br{\Xaggb}_{\setEEAggbp}}{F} \nonumber \\&\qquad \qquad\qquad\qquad\qquad\leq \normw{\PEEaggb-\PEEaggbp}{\infty}.
    \label{eq:ineqsmoothpower}
\end{align}
The second term is the projection of two different matrices on the same set. This term is in fact similar to what one would get from the rate game following a contraction approach, and it has been proven in \cite[Theorem 7]{scutari_mimo_2009} to be bounded by  
\begin{align}
    &\normw{\br{\Xaggb}_{\setEEAggbp} - \br{\Xaggbp}_{\setEEAggbp}}{F} \nonumber \\&\qquad \qquad \qquad \qquad \qquad \qquad\leq \sr{\Sc} \normw{\Qaggb-\Qaggb^{'}}{F},
    \label{eq:VI_contration_sr}
\end{align}
with $\mathbf{w}$ the right Perron eigenvector of the nonnegative matrix $\Sc$ \cite[Prop. 6.6]{bertsekas_parallel_1989}.
Using the two above bounds, a sufficient condition for the BR to be a contraction is 
\begin{align}
    &\sr{\Sc} \normw{\Qaggb-\Qaggb^{'}}{F} + \normw{\PEEaggb-\PEEaggbp}{\infty} \nonumber \\& \qquad \qquad \qquad \qquad \qquad \qquad \qquad  < \normw{\Qaggb-\Qaggb^{'}}{F}, 
    \label{eq:finalineq}
\end{align}
which concludes the proof.
\end{proof}

As for the QVI conditions, the above proposition can be interpreted as a limit on the interference level, as well as a limit on the sensitivity of the power mapping. It should furthermore be noted that the condition on the spectral radius is the same as state-of-the-art uniqueness conditions for users maximizing their rate. Moreover, the choice of $\mathbf{w}$ enables to obtain tight conditions for \eqref{eq:VI_contration_sr} but nothing says this choice is optimal for \eqref{eq:ineqsmoothpower}. A more cautious approach would be to chose $\mathbf{w}$ in order to obtain a tight global inequality \eqref{eq:finalineq}. Nevertheless, due to the complexity of the power mapping, this turns out to be very difficult.
~~\\
\subsubsection{Comparison between the QVI and Contraction Framework}
Comparing the QVI approach and the contraction approach, several comments can be made. First, one can notice the spectral radius is taken on $\Sc$ directly, this leading to tighter uniqueness conditions as \Cref{fig:criteria_comparison} shows, and as proven in \cite{schwenk_tight_1986}. Secondly, as $\sigma_{\max}\pa{\mathbf{I}_Q + \Sc} >1 $ \cite[Fact 4.11.18]{bernstein_dennis_s._matrix_2009}, the Lipschitz continuity requirement of the QVI approach appears to be more demanding than the contraction approach. Obviously, the effectiveness of the QVI approach depends on the bounds' quality of \Cref{lemma:Lipschitz}. To that aim, the following lemma proves that the Lipschitz constant cannot be equal to 1, and thus that the QVI approach cannot compete with the contraction approach, at least when considering the uniqueness theorem of \cite{nesterov_solving_2007}.  

\begin{lemma}
    The Lispchitz constant of the QVI mapping is at least greater than $\sqrt{Q}$. 
\end{lemma}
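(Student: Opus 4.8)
The plan is to exploit the affine structure of the QVI mapping in the square full-rank case, together with the observation that the self-dependence of each block of $\Faggb$ is exactly the identity. Since $\invIntb{q}$ does not involve $\Qb{q}$, the coefficient of $\Qb{q}$ in $\Fb{q}$ is $\mathbf{I}$ (equivalently $\Hb{qq}^{-1}\Hb{qq}=\mathbf{I}$), so writing $\mathbf{M}_{qr}\triangleq\Hb{qq}^{-1}\Hb{qr}$ (hence $\mathbf{M}_{qq}=\mathbf{I}$) and $\Db_r\triangleq\Qb{r}-\Qb{r}^{'}$, the linear part $\overline{\mathbf{T}}$ of $\Faggb$ reads
\begin{equation}
    \br{\overline{\mathbf{T}}\pa{\Db}}_q = \Db_q + \sum_{r\neq q}\mathbf{M}_{qr}\Db_r\mathbf{M}_{qr}^H .
\end{equation}
The Lipschitz constant is the operator norm of $\overline{\mathbf{T}}$, measured (as in \Cref{lemma:Lipschitz}) by the Frobenius norm of the output against the block-maximum norm of the increment; I would lower-bound it by exhibiting a single favourable increment.

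First I would select, in each block $q$, a fixed \emph{traceless} Hermitian direction $\Db_q$ with $\norm{\Db_q}{F}=1$; taking $\Qb{q}^{'}\propto\mathbf{I}$ and $\Qb{q}=\Qb{q}^{'}+s\Db_q$ for a small $s>0$ keeps both matrices in $\Qbar{P_q}$, so the pair is admissible, and, the ratio being scale-invariant, I may normalize $\normw{\Db}{F}=\max_q\norm{\Db_q}{F}=1$. The reason for spreading the perturbation equally over the $Q$ players is that each of the $Q$ diagonal identity terms contributes a full unit to $\norm{\overline{\mathbf{T}}\pa{\Db}}{F}^2$, so that these units accumulate into the sought $\sqrt{Q}$ while the block-maximum size of the input stays bounded by $1$.

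The main obstacle is that, for a fixed increment, the interference cross-terms $\mathbf{M}_{qr}\Db_r\mathbf{M}_{qr}^H$ could partially cancel the diagonal contributions. To neutralize them I would randomize the signs, replacing $\Db_q$ by $\epsilon_q\Db_q$ with $\epsilon_q\in\acc{-1,+1}$; since the individual block norms are unchanged this keeps $\normw{\Db^{\epsilon}}{F}=1$ for every sign pattern, while averaging kills every cross-product between two distinct indices ($\mathcal{E}\acc{\epsilon_r\epsilon_{r'}}=0$ for $r\neq r'$):
\begin{equation}
    \mathcal{E}\acc{\norm{\overline{\mathbf{T}}\pa{\Db^{\epsilon}}}{F}^2}=\sum_{q}\pa{\norm{\Db_q}{F}^2+\sum_{r\neq q}\norm{\mathbf{M}_{qr}\Db_r\mathbf{M}_{qr}^H}{F}^2}\geq\sum_q\norm{\Db_q}{F}^2=Q .
\end{equation}
Hence some sign choice realizes $\norm{\Faggb-\Faggbp}{F}=\norm{\overline{\mathbf{T}}\pa{\Db^{\epsilon}}}{F}\geq\sqrt{Q}$ with $\normw{\Db^{\epsilon}}{F}=1$, which delivers the lower bound $\sqrt{Q}$ on the Lipschitz constant.

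Finally I would observe that the bound is tight in the interference-free regime ($\Hb{qr}=\mathbf{0}$, $r\neq q$), where $\overline{\mathbf{T}}$ acts blockwise as the identity and the displayed inequality becomes an equality. This confirms that the $\sqrt{Q}$ factor is intrinsic to the block structure and can never be lowered to $1$, so the QVI route cannot match the tighter threshold of the contraction analysis.
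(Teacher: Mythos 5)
There is a genuine gap, and it lies in the choice of norm on the increment. The Lipschitz constant that the lemma bounds is the constant $L$ of \Cref{lemma:Lipschitz}, which is defined with the \emph{aggregate Frobenius norm on both sides}: $\norm{\Faggb-\Faggbp}{F}\leq L\norm{\Qaggb-\Qaggb^{'}}{F}$, where $\norm{\Qaggb-\Qaggb^{'}}{F}^2=\sum_{q}\norm{\Eb{q}}{F}^2$. You instead normalize the input by the weighted block-maximum norm $\normw{\cdot}{F}$, which is the norm of the contraction analysis, not of the QVI one. In your construction every one of the $Q$ blocks has unit Frobenius norm, so the aggregate Frobenius norm of the increment is $\sqrt{Q}$, not $1$; your (correct) computation $\mathcal{E}\acc{\norm{\overline{\mathbf{T}}(\Db^{\epsilon})}{F}^2}\geq Q$ therefore only certifies a ratio of $\sqrt{Q}/\sqrt{Q}=1$, i.e.\ $L\geq 1$, which is not the claim. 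Your closing ``tightness'' remark makes the mismatch visible: in the interference-free regime the mapping acts as the identity, whose Frobenius-to-Frobenius operator norm is exactly $1$, so under the relevant norm that example can never exceed $1$.

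The $\sqrt{Q}$ factor comes from the opposite configuration: concentrate the increment in a \emph{single} block and let the cross-channels replicate it into all $Q$ output blocks. This is what the paper does: take all channel matrices equal to the identity and $\Qb{q}=\Qb{q}^{'}$ for all $q\neq o$; then $\Fb{q}-\Fbp{q}=\sum_{r}\Eb{r}=\Eb{o}$ for every $q$, hence $\norm{\Faggb-\Faggbp}{F}^2=Q\norm{\Eb{o}}{F}^2=Q\norm{\Qaggb-\Qaggb^{'}}{F}^2$ and $L\geq\sqrt{Q}$. Your sign-randomization device, which deliberately kills the cross-terms, removes exactly the mechanism (one block's perturbation appearing in all $Q$ components of $\Faggb$) that produces the growth; spreading the perturbation evenly over the players is the wrong direction for a Frobenius-to-Frobenius lower bound.
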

\begin{proof}
Consider a situation where all channel matrices are the identity matrix, and consider $\Qb{q} = \Qb{q}^{'} \quad \forall \, q \neq o$. These channel matrices and strategies lead to the following, as $\Eb{o}$ is the only nonzero matrix.
\begin{align}
    \norm{\Faggb-\Faggbp}{F}^2 &= \sum_{q=1}^Q\norm{\Fb{q}-\Fbp{q}}{F}^2,\nonumber \\
    & = \sum_{q=1}^Q\norm{\sum_{r=1}^Q\Eb{r}}{F}^2,\\
    &= Q \norm{\Eb{o}}{F}^2 = Q\norm{\Qaggb-\Qaggb^{'}}{F}^2,
\end{align}
concluding the proof.
\end{proof}

Therefore, contrarily to games in which users maximize their rate, the QVI framework is not as efficient as the contraction approach. Whether general QVI uniqueness results can be made tighter remains an open question, left for future work.
~~\\
\subsubsection{Back to the Original Game}
The unicity conditions, whatever the considered framework, depend on the interference matrix $\Sc$, which depends itself on the modified channel matrices $\Hb{qr}$. In the following, we express the elements of $\Sc$ with the original channel matrices. We only consider full-row rank channel matrices, as row-rank deficient matrices lead to rectangular modified channel matrices which have not been studied up to this point. 
\begin{corollary}
        \label{cor:unicityCond_fullrowrank}
        If $\mathbf{H}_{qq}$ is full row-rank, $\br{\Sc}_{qr}$ can be defined as
        \begin{equation}
             \br{\Sc}_{qr} =\left\{\begin{array}{ll} \sigmsq{\mathbf{H}_{qq}^{\sharp}\mathbf{H}_{qr}\mathbf{V}_{r,1}}& \text{if } r\neq q,   \\
                 0 &\text{otherwise}, 
            \end{array}\right. 
        \end{equation}
        with $\mathbf{V}_{r,1}$ the right unitary matrix of the compact SVD of $\mathbf{H}_{rr}$.
        \end{corollary}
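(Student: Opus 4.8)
The plan is to start from the definition of the interference matrix in \eqref{eq:Sdefinition}, where for a square nonsingular $\Hb{qq}$ one has $\br{\Sc}_{qr} = \sigmsq{\Hb{qq}^{-1}\Hb{qr}}$, and to rewrite this entry in terms of the original channel matrices using the compact SVD. First I would observe that full row-rank of $\mathbf{H}_{qq}$ is exactly the situation in which the modified matrix $\Hb{qq}$ is square and nonsingular, so that \Cref{prop:EEgame_unicty_squarefullrank} and \Cref{prop:EE_unicity_general_contraction} apply and $\Hb{qq}^{-1}$ is well defined. Indeed, from the compact SVD $\mathbf{H}_{qq} = \mathbf{U}_{q,1}\mathbf{\Sigma}_{qq}\mathbf{V}_{q,1}^H$, the full row-rank assumption forces $r_q = n_{R_q}$, so that $\mathbf{U}_{q,1}$ becomes a square unitary matrix and $\Hb{qq} = \mathbf{H}_{qq}\mathbf{V}_{q,1} = \mathbf{U}_{q,1}\mathbf{\Sigma}_{qq}$ is a product of invertible matrices, yielding $\Hb{qq}^{-1} = \mathbf{\Sigma}_{qq}^{-1}\mathbf{U}_{q,1}^H$.

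Next I would compute the Moore-Penrose pseudoinverse of the full row-rank $\mathbf{H}_{qq}$. Using the right-inverse formula $\mathbf{H}_{qq}^{\sharp} = \mathbf{H}_{qq}^H\pa{\mathbf{H}_{qq}\mathbf{H}_{qq}^H}^{-1}$, or equivalently reading the pseudoinverse directly off the SVD, one finds $\mathbf{H}_{qq}^{\sharp} = \mathbf{V}_{q,1}\mathbf{\Sigma}_{qq}^{-1}\mathbf{U}_{q,1}^H = \mathbf{V}_{q,1}\Hb{qq}^{-1}$. Substituting this together with the definition $\Hb{qr} = \mathbf{H}_{qr}\mathbf{V}_{r,1}$ gives $\mathbf{H}_{qq}^{\sharp}\mathbf{H}_{qr}\mathbf{V}_{r,1} = \mathbf{V}_{q,1}\Hb{qq}^{-1}\Hb{qr}$, so the only discrepancy with the original entry is a stray left factor $\mathbf{V}_{q,1}$.

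Finally, I would invoke the invariance of singular values under left multiplication by a semi-unitary matrix: since $\mathbf{V}_{q,1}^H\mathbf{V}_{q,1} = \mathbf{I}_{r_q}$, one has $\pa{\mathbf{V}_{q,1}\mathbf{A}}^H\pa{\mathbf{V}_{q,1}\mathbf{A}} = \mathbf{A}^H\mathbf{A}$ for any $\mathbf{A}$, so the nonzero singular values of $\mathbf{V}_{q,1}\mathbf{A}$ and $\mathbf{A}$ coincide. Applying this with $\mathbf{A} = \Hb{qq}^{-1}\Hb{qr}$ yields $\sigmsq{\mathbf{H}_{qq}^{\sharp}\mathbf{H}_{qr}\mathbf{V}_{r,1}} = \sigmsq{\Hb{qq}^{-1}\Hb{qr}} = \br{\Sc}_{qr}$, which is the claimed identity.

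The computation is entirely routine; the only points requiring care are recognizing that full row-rank corresponds to the square-nonsingular regime treated earlier (so $\mathbf{U}_{q,1}$ is genuinely unitary rather than merely semi-unitary, making $\Hb{qq}^{-1}$ exist), and correctly tracking which SVD factors cancel in the pseudoinverse so that the residual $\mathbf{V}_{q,1}$ is absorbed by the singular-value invariance rather than altering the spectrum.
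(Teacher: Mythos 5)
Your proof is correct and follows essentially the same route as the paper: both express $\mathbf{H}_{qq}^{\sharp}$ through the compact SVD and exploit $\mathbf{V}_{q,1}^H\mathbf{V}_{q,1}=\mathbf{I}_{r_q}$ to identify $\sigmsq{\mathbf{H}_{qq}^{\sharp}\mathbf{H}_{qr}\mathbf{V}_{r,1}}$ with $\sigmsq{\Hb{qq}^{-1}\Hb{qr}}$ (the paper verifies $\mathbf{H}_{qq}^{\sharp^H}\mathbf{H}_{qq}^{\sharp}=\Hb{qq}^{-H}\Hb{qq}^{-1}$ and passes through the spectral radius of the Gram matrix, whereas you factor $\mathbf{H}_{qq}^{\sharp}=\mathbf{V}_{q,1}\Hb{qq}^{-1}$ and cancel the semi-unitary factor directly, which is the same computation in a slightly different order). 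Your explicit remark that full row-rank makes $\mathbf{U}_{q,1}$ square unitary and hence $\Hb{qq}$ invertible is a correct and welcome clarification that the paper leaves implicit.
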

        \begin{corollary}
        \label{cor:unicityCond_squarefullrank}
         If $\mathbf{H}_{qq}$ is full row rank and if $\mathbf{H}_{rr}$ is square nonsingular, $\br{\Sc}_{qr}$ can be simplified as
        \begin{equation}
             \br{\Sc}_{qr} = \left\{\begin{array}{ll} \sigmsq{\mathbf{H}_{qq}^{\sharp}\mathbf{H}_{qr}}& \text{if } r\neq q,   \\
                 0 &\text{otherwise}.
            \end{array}\right. 
        \end{equation}
        \end{corollary}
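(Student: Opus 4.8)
The plan is to specialize \Cref{cor:unicityCond_fullrowrank} rather than to restart from the definition \eqref{eq:Sdefinition}. Since $\mathbf{H}_{qq}$ is assumed full row-rank, that corollary already provides $\br{\Sc}_{qr} = \sigmsq{\mathbf{H}_{qq}^{\sharp}\mathbf{H}_{qr}\mathbf{V}_{r,1}}$ for $r\neq q$ (and $0$ on the diagonal), so the entire task reduces to showing that the trailing factor $\mathbf{V}_{r,1}$ may be dropped once the extra hypothesis that $\mathbf{H}_{rr}$ is square nonsingular is added.

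First I would revisit the compact SVD $\mathbf{H}_{rr} = \mathbf{U}_{r,1}\mathbf{\Sigma}_{rr}\mathbf{V}_{r,1}^H$. In general $\mathbf{V}_{r,1}\in\C{n_{T_r}}{r_r}$ is only semi-unitary, carrying $r_r=\rank{\mathbf{H}_{rr}}$ orthonormal columns. When $\mathbf{H}_{rr}$ is square and nonsingular, however, $r_r = n_{T_r} = n_{R_r}$, so the compact SVD coincides with the full SVD and $\mathbf{V}_{r,1}$ becomes a genuine square unitary matrix, satisfying $\mathbf{V}_{r,1}\mathbf{V}_{r,1}^H = \mathbf{V}_{r,1}^H\mathbf{V}_{r,1} = \mathbf{I}_{n_{T_r}}$.

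Next I would invoke the unitary invariance of the maximum singular value. For any matrix $\mathbf{M}$ and any square unitary $\mathbf{V}_{r,1}$, the Gram matrix $\pa{\mathbf{M}\mathbf{V}_{r,1}}^H\pa{\mathbf{M}\mathbf{V}_{r,1}} = \mathbf{V}_{r,1}^H\mathbf{M}^H\mathbf{M}\mathbf{V}_{r,1}$ is a unitary similarity transform of $\mathbf{M}^H\mathbf{M}$ and hence shares its spectrum, which gives $\sigma_{\max}\pa{\mathbf{M}\mathbf{V}_{r,1}} = \sigma_{\max}\pa{\mathbf{M}}$. Taking $\mathbf{M} = \mathbf{H}_{qq}^{\sharp}\mathbf{H}_{qr}$ then yields $\sigmsq{\mathbf{H}_{qq}^{\sharp}\mathbf{H}_{qr}\mathbf{V}_{r,1}} = \sigmsq{\mathbf{H}_{qq}^{\sharp}\mathbf{H}_{qr}}$, the diagonal entries being untouched, which is exactly the claimed simplification.

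There is no substantial obstacle here, as the result is an immediate specialization. The only point deserving care is the justification that $\mathbf{V}_{r,1}$ is truly square unitary and not merely column-orthonormal; this is precisely where the square-nonsingular hypothesis on $\mathbf{H}_{rr}$ enters (it forces $r_r = n_{T_r}$), and it is what makes right-multiplication by $\mathbf{V}_{r,1}$ singular-value \emph{preserving} rather than only singular-value nonincreasing.
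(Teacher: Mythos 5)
Your proposal is correct and follows essentially the same route as the paper: the paper also obtains Corollary 2 by taking the expression $\sigmsq{\mathbf{H}_{qq}^{\sharp}\mathbf{H}_{qr}\mathbf{V}_{r,1}}$ from Corollary 1 and noting that square nonsingularity of $\mathbf{H}_{rr}$ makes $\mathbf{V}_{r,1}$ square unitary, so the singular values are unchanged by the right multiplication. Your explicit Gram-matrix justification of that unitary invariance is a slightly more detailed version of the paper's one-line appeal to the same fact.
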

        \begin{proof}
        The Moore-Penrose inverse of $\mathbf{H}_{qq}$ can be written in terms of its SVD as \cite[Section 6.1]{bernstein_dennis_s._matrix_2009} $       \mathbf{H}_{qq}^{\sharp} = \mathbf{V}_{q,1}\mathbf{\Sigma}_{qq}^{-1}\mathbf{U}_{q,1}^H$.    Hence, as $\Hb{qq}=\mathbf{H}_{qq}\mathbf{V}_{q,1}$ and from the semi-unitary property of $\mathbf{V}_{q,1}$,
        \begin{align}            &\mathbf{H}_{qq}^{\sharp^H}\mathbf{H}_{qq}^{\sharp} = \mathbf{U}_{q,1}\mathbf{\Sigma}_{qq}^{-1}\mathbf{V}_{q,1}^H\mathbf{V}_{q,1}\mathbf{\Sigma}_{qq}^{-1}\mathbf{U}_{q,1}^H \nonumber \\ &\qquad \qquad\quad = \mathbf{U}_{q,1}\mathbf{\Sigma}_{qq}^{-1}\mathbf{\Sigma}_{qq}^{-1}\mathbf{U}_{q,1}^H = \Hb{qq}^{-H}\Hb{qq}^{-1}.  
        \end{align}
        Hence, from the definition of $\Hb{qr} = \mathbf{H}_{qr}\mathbf{V}_{r,1}$,
        \begin{align}
         \sigmsq{\Hb{qq}^{-1}\Hb{qr}} &= \sr{\Hb{qr}^H\Hb{qq}^{-H}\Hb{qq}^{-1}\Hb{qr}},\nonumber \\
         & = \sr{\mathbf{V}_{r,1}^H\mathbf{H}_{qr}^H\mathbf{H}_{qq}^{\sharp^H}\mathbf{H}_{qq}^{\sharp}\mathbf{H}_{qr}\mathbf{V}_{r,1}},\\
         &=\sigmsq{\mathbf{H}_{qq}^{\sharp}\mathbf{H}_{qr}\mathbf{V}_{r,1}},\\
         &\leq  \sigmsq{\mathbf{H}_{qq}^{\sharp}\mathbf{H}_{qr}},\label{eq:inequality_singular}
        \end{align}
        proving \Cref{cor:unicityCond_fullrowrank}, the above inequality coming from the fact that the maximum singular value of a product is smaller than the product of the maximum singular values \cite[Corollary 9.3.7]{bernstein_dennis_s._matrix_2009} and from $\sigma_{\max}\pa{\mathbf{V}_{r,1}}=1$. If $\mathbf{H}_{rr}$ is square nonsingular, then $\mathbf{V}_{r,1}$ is square unitary and the equality holds as the singular values are unitarily invariant, completing the proof of \Cref{cor:unicityCond_squarefullrank}. 
        \end{proof}
Intuitively, when channel matrices are full-row rank (i.e. when there is more antennas at the $r$th transmitter than at the $q$th receiver), the $\mathbf{V}_{r,1}$ factor enables to take into account the fact that, even if the $r$th transmitter has a large number of antennas to play with, the interference affects only a small number of antennas at the $q$th transmitter. Comparing the above relations to those obtained in \cite{scutari_mimo_2009} for the rate game, they match in case of full rank square direct channel matrices. However, in case of full-row rank matrices, the authors of \cite{scutari_mimo_2009} lack the $\mathbf{V}_{r,1}$ factor for the elements of $\Sc$, even though this factor, from \eqref{eq:inequality_singular}, leads to smaller matrix elements and therefore to tighter  uniqueness conditions\footnote{In the proof of \cite{scutari_mimo_2009}, one could replace the strategy profiles $\mathbf{Q}_r$ in equation (32) by the solution characterization obtained in \cite[Appendix C]{scutari_mimo_2009}: $\mathbf{V}_{r,1}\mathbf{V}_{r,1}^H\mathbf{Q}_r\mathbf{V}_{r,1}\mathbf{V}_{r,1}^H$. Doing so, the $\mathbf{V}_{r,1}$ factor would appear in the unicity criterion.}. 
~~\\
\subsubsection{Full-Column Rank Matrices}
When the channel matrices are full-column rank, no reverse-order law enables to simplify $\Xb{q}$, leading therefore to a nonlinear mapping $\Faggb$ in the QVI approach, and to a more complex projection in the contraction framework. Similarly to the approach followed in \cite{scutari_mimo_2009}, uniqueness results can be obtained through the use of a mean value theorem for complex matrices. Specifically, as discussed in the following, all the above results hold, yet with a modified interference matrix $\Scb$, defined as
\begin{equation}
           \br{\Scb}_{qr} \triangleq \left\{\begin{array}{ll}
            \underset{\Db\in\setAggb}\max\sigmsq{\Gb{qr}{\Db}} & \text{if } r\neq q,   \\
                 0 &\text{otherwise},
            \end{array}\right.
            \label{eq:definition_Sbar}
        \end{equation}
with \begin{equation}
            \Gb{qr}{\Db} \triangleq \invIntbD{q}\Hb{qq}^H\RmbD{q}\Hb{qr}.
            \label{eq:definitionG}
        \end{equation}
It is important to note that if $\Hb{qq}$ is square nonsingular, then $\overline{\mathbf{G}}_{qr} = \Hb{qq}^{-1}\Hb{qr}$ does not depend on $\Db$ anymore and therefore $\Scb$ boils down to $\Sc$. 
\begin{proposition}[NE uniqueness for full-column rank channel matrices - contraction analysis]
If the power mapping verifies
\begin{align}
    \normw{\PEEaggb - \PEEaggbp}{\infty} < \pa{1-\sr{\mathbf{\Scb}}} \normw{\Qaggb-\Qaggb^{'}}{F},
    \label{eq:contraction_condition_full}
    \end{align}
$\forall \Qaggb, \Qaggb^{'} \in \fullsetEEAggb$, with $\Scb$ defined in \eqref{eq:definition_Sbar} and $\mathbf{w}$ its right Perron eigenvector, then $\Ge$ admits a unique NE.
    
\end{proposition}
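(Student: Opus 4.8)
The plan is to reuse, almost verbatim, the decomposition from the proof of \Cref{prop:EE_unicity_general_contraction}, the only structural change being the bound on the sensitivity of the projected matrices. First I would express the BR as a projection and apply the triangular inequality exactly as in \eqref{eq:EE_sum_dist}, splitting $\normw{\br{\Xaggb}_{\setEEAggb} - \br{\Xaggbp}_{\setEEAggbp}}{F}$ into (i) the projection of the \emph{same} matrix onto the two different sets $\setEEAggb$ and $\setEEAggbp$, and (ii) the projection of the two \emph{different} matrices $\Xaggb,\Xaggbp$ onto the common set $\setEEAggbp$. Term (i) is untouched by the rank assumption, since it depends only on the geometry of the strategy sets and not on the internal structure of $\Xb{q}$; hence \Cref{lemma:smoothpower} still yields the power-mapping bound \eqref{eq:ineqsmoothpower}, namely $\normw{\PEEaggb-\PEEaggbp}{\infty}$.

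The crux lies in term (ii). When $\Hb{qq}$ is square nonsingular, the reverse-order law lets one write $\Xb{q}=-\invIntb{q}=-\Hb{qq}^{-1}\Rib{q}\Hb{qq}^{-H}$ as an \emph{affine} function of $\Qaggb$, so that \cite[Theorem 7]{scutari_mimo_2009} bounds term (ii) linearly by $\sr{\Sc}\normw{\Qaggb-\Qaggb^{'}}{F}$ as in \eqref{eq:VI_contration_sr}. For full-column rank $\Hb{qq}$ no such factorization is available, and $\Xb{q}$ becomes a genuinely nonlinear function of $\Qb{-q}$, so \eqref{eq:VI_contration_sr} no longer applies. The remedy, following \cite{scutari_mimo_2009}, is a mean value theorem for matrix-valued functions of a Hermitian matrix argument. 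Since the Euclidean projection onto a \emph{fixed} convex set is blockwise nonexpansive in the Frobenius norm, term (ii) is first bounded by $\normw{\Xaggb-\Xaggbp}{F}$; applying the mean value theorem to the map $\Qb{-q}\mapsto-\invIntb{q}$ along the segment joining $\Qaggb$ and $\Qaggb^{'}$ shows that the blockwise difference has the sandwiched form $\Gb{qr}{\Db}\pa{\Qb{r}-\Qb{r}^{'}}\Gbh{qr}{\Db}$ evaluated at some intermediate $\Db$, with $\Gb{qr}{\Db}$ as in \eqref{eq:definitionG}, yielding a per-block gain $\sigmsq{\Gb{qr}{\Db}}$ (the square appearing exactly as in the square-case bound $\norm{A\mathbf{X}A^H}{F}\le\sigma_{\max}(A)^2\norm{\mathbf{X}}{F}$).

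Taking the maximum over $\Db\in\setAggb$ replaces the per-block gains by the entries of $\Scb$ defined in \eqref{eq:definition_Sbar}, so that blockwise $\normw{\Xb{q}-\Xb{q}^{'}}{F}\le\sum_{r\neq q}\br{\Scb}_{qr}\norm{\Qb{r}-\Qb{r}^{'}}{F}$. The Perron-eigenvector argument then carries over unchanged: choosing $\mathbf{w}$ to be the right Perron eigenvector of the nonnegative matrix $\Scb$, term (ii) is bounded by $\sr{\Scb}\normw{\Qaggb-\Qaggb^{'}}{F}$. Substituting this together with \eqref{eq:ineqsmoothpower} into the sufficient contraction condition and rearranging, exactly as in \eqref{eq:finalineq}, gives \eqref{eq:contraction_condition_full}. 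I would close by noting that if $\Hb{qq}$ is square nonsingular then $\Gb{qr}{\Db}=\Hb{qq}^{-1}\Hb{qr}$ is independent of $\Db$, the maximum in \eqref{eq:definition_Sbar} collapses, $\Scb=\Sc$, and \Cref{prop:EE_unicity_general_contraction} is recovered as a special case.

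The main obstacle is making the matrix mean value theorem rigorous over the complex Hermitian strategy set: one must compute the differential of $\Qb{-q}\mapsto-\invIntb{q}$ (a composition of the matrix-inversion map with the affine map building $\Rib{q}$), verify that the induced blockwise gain in the weighted Frobenius norm is exactly $\sigmsq{\Gb{qr}{\Db}}$, and ensure the domain of the maximum in \eqref{eq:definition_Sbar} covers every intermediate point the theorem produces along the segment between $\Qaggb$ and $\Qaggb^{'}$, so that $\Scb$ genuinely dominates the realized gains. Once this Lipschitz-type estimate for the nonlinear map is secured, the remainder of the argument is identical to the square-case template.
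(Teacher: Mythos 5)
Your proposal is correct and follows essentially the same route as the paper: the paper's own proof simply notes that \Cref{lemma:smoothpower} is rank-independent and invokes \cite[Theorem 7]{scutari_mimo_2009} for the interference term, then adapts the decomposition of \Cref{prop:EE_unicity_general_contraction} verbatim. Your additional unpacking of the mean-value-theorem argument behind the cited theorem (the sandwiched differential $\Gb{qr}{\Db}\pa{\cdot}\Gbh{qr}{\Db}$ and the maximization over $\Db$ yielding $\Scb$) is consistent with what that reference establishes, so there is no gap.
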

\begin{proof}
The proof follows from the fact that \Cref{lemma:smoothpower} does not make any assumption on the rank of the channel matrices, and from \cite[Theorem 7]{scutari_mimo_2009}. Building on these results, the proof of \Cref{prop:EE_unicity_general_contraction} can be easily adapted.
\end{proof}

\label{sub:fullcolrank}

\section{Iterative algorithm}
\label{sec:algo}
We have obtained above conditions on the existence and uniqueness of the NE of $\Ge$. Assuming a unique NE exists, this section describes an asynchronous decentralized algorithm which echoes the totally asynchronous IWFA developed for the MIMO rate game in \cite{scutari_competitive_2008}. To that aim, the set of times at which the $q$th player updates its strategy is denoted by $\mathcal{T}_q$. Moreover, this update is performed given the interference level of the other users which needs to be measured beforehand. The most recent time at which the $q$th user has measured the interference coming from the $r$th user is denoted by $0\leq \tau^q_r(t)\leq t$. Grouping together the strategies measured at the times $\tau^q_r(t)$, the $q$th user updates its strategy based on the strategy profile of the other users $\mathbf{Q}_{-q}^{\pa{\tau^q(t)}}$. We assume that the update and measure times satisfy mild conditions which are verified for any practical wireless system, ensuring that all users update at some point their strategy and that old information is eventually purged from the system \cite{bertsekas_parallel_1989}. With these definitions, the totally asynchronous EE-IWFA is presented in \Cref{algo:IWFA}.

\begin{algorithm}
\caption{EE-IWFA - $q$th user}
\begin{algorithmic}
\State Set $t=0$, $\mathbf{Q}_q^{\pa{0}} \in \mathcal{Q}\pa{\mathbf{p}}$.
    \For{$t = 0 \rightarrow \infty$}
        $$\mathbf{Q}_q^{(t+1)} = \left\{\begin{array}{cc}
            \textbf{\text{BR}}_{q}\pa{\mathbf{Q}_{-q}^{\pa{\tau^q(t)}}} &  \text{if} \: t \in \mathcal{T}_q,\\
             \mathbf{Q}_q^{(t)}&  \text{otherwise.}
        \end{array}\right.$$
    \EndFor
\label{algo:IWFA}
\end{algorithmic}
\end{algorithm}

\begin{proposition}[Convergence of the EE-IWFA]
    If condition \eqref{eq:contraction_condition_full} holds, then the EE-IWFA converges linearly to the unique NE of $\Ge$.
\end{proposition}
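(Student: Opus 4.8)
The plan is to reduce the convergence claim to the classical theory of totally asynchronous iterations of \cite{bertsekas_parallel_1989}, exploiting the fact that condition \eqref{eq:contraction_condition_full} is exactly what makes the best-response mapping $\overline{\textbf{\text{BR}}}$ a \emph{contraction} in the weighted block-maximum norm $\normw{\cdot}{F}$. Since the three game formulations $\Ge,\Ges,\Geb$ are equivalent (Section \ref{sec:BR}), it suffices to prove convergence of the reduced-dimension iteration $\Qaggb^{(t+1)}=\BREEAggb^{(\tau(t))}$ produced by \Cref{algo:IWFA}, whose fixed points are precisely the NE of $\Geb$, and then transfer the conclusion back to $\Ge$.

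First I would extract the contraction property that is already implicit in the proof of \Cref{prop:EE_unicity_general_contraction} and its full-column-rank counterpart. Chaining the power-sensitivity bound \eqref{eq:ineqsmoothpower} with the projection bound \cite[Theorem 7]{scutari_mimo_2009} written in terms of $\Scb$, and setting $c\triangleq\sup_{\Qaggb\neq\Qaggb^{'}}\normw{\PEEaggb-\PEEaggbp}{\infty}/\normw{\Qaggb-\Qaggb^{'}}{F}$, condition \eqref{eq:contraction_condition_full} furnishes a modulus $\beta\triangleq\sr{\Scb}+c<1$ with
\begin{equation}
    \normw{\BREEAggb-\BREEAggbp}{F}\le \beta\,\normw{\Qaggb-\Qaggb^{'}}{F}\qquad\forall\,\Qaggb,\Qaggb^{'}\in\fullsetEEAggb .
\end{equation}
The decisive reason for insisting on the block-maximum norm, rather than an arbitrary one, is that its balls centered at the unique fixed point $\QaggbNE$ are Cartesian products over the players,
\begin{equation}
    \mathcal{B}_k\triangleq\bigtimes_{q=1}^{Q}\acc{\mathbf{Q}\in\Qbar{P_q} : \norm{\mathbf{Q}-\QbNE{q}}{F}\le w_q\,\beta^{k}r_0},
\end{equation}
so that the contraction yields the nested-box property $\overline{\textbf{\text{BR}}}(\mathcal{B}_k)\subseteq\mathcal{B}_{k+1}\subseteq\mathcal{B}_k$. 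This sequence of boxes, together with the mild total-asynchronism assumptions already imposed on $\mathcal{T}_q$ and $\tau^q_r(t)$ (every player updates infinitely often and outdated interference information is eventually purged), is exactly the hypothesis of the Asynchronous Convergence Theorem \cite{bertsekas_parallel_1989}. Invoking it, every trajectory of \Cref{algo:IWFA} is eventually trapped in, and then threads through, the successive boxes $\mathcal{B}_k$, hence converges to $\QaggbNE$; the geometric shrinking of the box radii by the factor $\beta$ delivers the linear rate.

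The main obstacle is not the invocation of the theorem but the verification that its box hypothesis is genuinely realized here, respecting the per-player block decomposition demanded by asynchronism. Two points need care: (i) the iterates must remain feasible even though each update uses \emph{stale} interference measurements $\Qaggb^{(\tau(t))}$, which is secured by the projection form \eqref{eq:EE_projection} of the best response, since every update lands in the associated strategy-dependent set $\Qbareq{\PEEb{q}}$; and (ii) the modulus $\beta$ must be \emph{uniform} over $\fullsetEEAggb$, which is precisely what \eqref{eq:contraction_condition_full} guarantees through the supremum defining $c$. A secondary subtlety worth flagging is the exact sense of ``linear'': under unbounded delays the decrease is geometric across the indices $k$ labelling the boxes rather than uniformly per iteration $t$, so the linear-rate statement should be read relative to the block-update schedule. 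Once these are settled, the equivalence of $\Geb$ and $\Ge$ from Section \ref{sec:BR} completes the argument.
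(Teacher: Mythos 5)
Your proposal is correct and follows essentially the same route as the paper: the paper's proof is a one-line appeal to the fact that condition \eqref{eq:contraction_condition_full} makes the best response a block-contraction, deferring the asynchronous convergence argument to \cite[Appendix D]{scutari2009mimo}, which is precisely the nested-box Asynchronous Convergence Theorem machinery of \cite{bertsekas_parallel_1989} that you spell out. You simply make explicit the details (uniform modulus, feasibility under stale measurements, transfer from $\Geb$ back to $\Ge$) that the paper delegates to the citation.
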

\begin{proof}
    The proof follows directly from the fact that the BR is a contraction, and can be found in \cite[Appendix D]{scutari2009mimo}. 
\end{proof}
Note that in order to compute their BR, users need to know neither the cross-channel matrices, nor the exact strategy of the other users. The only needed quantities are the direct channel matrix as well as the interference level $\Ri{q}$ (one can check that $\Ri{q} = \Rib{q}$). Hence, no information needs to be exchanged between the TRPs and the algorithm can be run in a fully decentralized way. Varying the sets $\mathcal{T}_q$, the above algorithm can either be sequential, synchronous, or asynchronous. Comparing this decentralized algorithm to the one of \cite{stupia_power_2015}, the EE-IWFA benefits from the fact that it can be run in a fully asynchronous way, at that it is in line with the BR of the players. Numerical results are provided in the following, both for synchronous and asynchronous updates. We stress out that these numerical results are obtained for specific sets of channel matrices, and that a systematic evaluation of the EE-IWFA performances are an interesting direction for future work. Yet, these examples enable to grasp the main feature of this decentralized algorithm. We consider a scenario similar to \cite{stupia_power_2015}, with $8$ players having $n_{T_q} = n_{R_q} = r_q = 4$ (for simplicity, we consider full rank square channel matrices.). The maximum power corresponds to the number of antennas, so that a uniform power allocation assigns $1$ unit of power on each of the MIMO directions ($P_q =4$), while the hardware circuitry power is set to $\Psi_q = 1$. The noise is considered independent on each of the antennas, leading to diagonal $\mathbf{R}_{n_q}$ and such that $\text{SNR}_q = 7\text{dB}$. Each element of the channel matrices is drawn independently from a complex normal distribution with variance $\sigma^2_{qr}$ for matrix $\mathbf{H}_{qr}$, the variance being chosen so as to obtain the target SIR as \begin{align}\sigma^2_{qr} =\left\{\begin{array}{cc}
        1  &  \text{if} \quad r =  q,\\
        \frac{1}{(Q-1)\text{SIR}} & \text{otherwise}.
    \end{array}\right.\end{align}Finally, the threshold of the iterative Dinkelbach method is set to $\epsilon = 10^{-9}$.
\begin{figure}[ht]
\centering
\scalebox{0.5}{\input{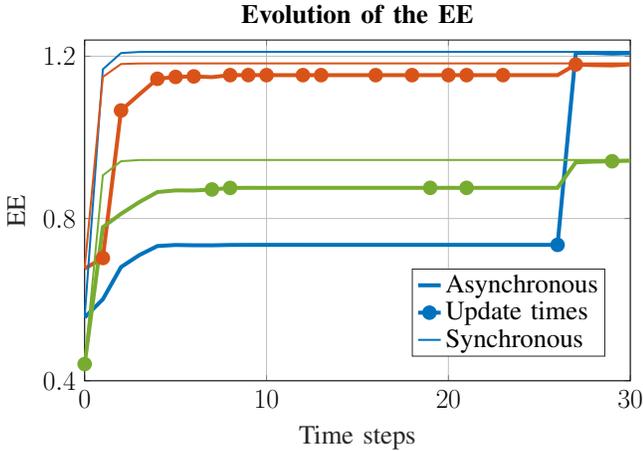}}
\caption{Evolution of the EE of three of the players for a SIR of $0$dB.}
\label{fig:rateGE}
\end{figure}
In \Cref{fig:rateGE}, the EE of three of the users is shown as a function of the time steps, for a moderate SIR. The synchronous iterations are shown with thin lines while the asynchronous iterations are represented with thick lines. The dots represent the update times, which differ from user to user. From this graph, one can observe that both algorithms converge to the same NE. Interestingly, even players allowed to update their strategy far less often than others converge in few iterations. As far as the converge rate is concerned, \Cref{fig:convergence_IWFA} shows the convergence rate of the EE-IWFA, measured as the block maximum norm of the difference between consecutive strategies \cite[Appendix D]{scutari_mimo_2009}, for different SIRs. For a high to moderate SIR, it appears the convergence is linear (as expected), and limited by the accuracy of the Dinkelbach method. More interestingly, this example shows that even when the unicity and convergence condition is not fulfilled ($\sr{\Sc}>1$), EE-IWFA may still converge as all the above analyses consider worst case guarantees. Eventually, for very low SIRs and thus very high spectral radii, the method does not converge. 
\begin{figure}[ht]
  \centering
  \scalebox{0.5}{
%
%
\definecolor{mycolor1}{rgb}{0.00000,0.44700,0.74100}%
\definecolor{mycolor2}{rgb}{0.85000,0.32500,0.09800}%
\definecolor{mycolor3}{rgb}{0.92900,0.69400,0.12500}%
\definecolor{mycolor4}{rgb}{0.49400,0.18400,0.55600}%
\definecolor{mycolor5}{rgb}{0.46600,0.67400,0.18800}%
\definecolor{mycolor6}{rgb}{0.30100,0.74500,0.93300}%
\definecolor{mycolor7}{rgb}{0.63500,0.07800,0.18400}%

\begin{tikzpicture}
\huge
\begin{axis}[%
width=0.6\textwidth,
height=0.6\textwidth,
at={(0.892in,0.646in)},
scale only axis,
xmin=1,
xmax=30,
xlabel style={font=\color{white!15!black}},
xlabel={Time steps},
ymode=log,
ymin=1e-13,
ymax=1e2,
yminorticks=false,
ytick={1e-9,1e-6,1e-3,1,1e2},
yticklabels={{$\epsilon$},{$10^{-6}$},{$10^{-3}$},{$10^{0}$},{$10^{2}$}},
ylabel style={font=\color{white!15!black}},
ylabel={Norm difference},
axis background/.style={fill=white},
title style={font=\bfseries},
title={Convergence analysis},
xmajorgrids,
ymajorgrids
]

\addplot [color=mycolor2, line width=4.0pt, forget plot]
  table[row sep=crcr]{%
1	53.8779076387108\\
2	0.834956830934695\\
3	0.00427272784485808\\
4	3.97547206006827e-05\\
5	3.33437419775465e-07\\
6	2.40408377794344e-09\\
7	7.01747285437845e-11\\
8	4.61595006662368e-11\\
9	4.79380131946968e-11\\
10	1.22182790399556e-09\\
11	3.09372402394547e-09\\
12	3.1260453514248e-09\\
13	8.47644211479453e-12\\
14	2.6253876215605e-10\\
15	2.6253457144568e-10\\
16	1.85775733160133e-10\\
17	1.03385066906472e-10\\
18	3.015089676067e-10\\
19	3.03692336317974e-10\\
20	2.75594212607552e-10\\
21	2.76398977987857e-10\\
22	1.55742149737224e-11\\
23	1.02565684968517e-11\\
24	4.37713202688432e-11\\
25	4.37912974492179e-11\\
26	1.37659059753653e-09\\
27	1.30593203074393e-09\\
28	1.01981204806441e-10\\
29	1.01129204110099e-10\\
30	2.53947952734534e-10\\
};

\addplot [color=mycolor1, line width=4.0pt, forget plot]
  table[row sep=crcr]{%
1	41.3489629393179\\
2	4.27693178203504\\
3	0.759335367112384\\
4	0.0667341879558181\\
5	0.00548545073861869\\
6	0.000581932603620935\\
7	6.86276437863985e-05\\
8	9.75634086077521e-06\\
9	9.42858414146696e-07\\
10	9.38802393238077e-08\\
11	9.68169740867078e-09\\
12	2.39216151478834e-09\\
13	2.40427264183501e-09\\
14	8.1587776258164e-10\\
15	4.920950973926e-10\\
16	4.74127338403811e-10\\
17	1.8792610908065e-10\\
18	2.14094791363109e-09\\
19	2.14955367179176e-09\\
20	1.25349435477503e-10\\
21	7.5141728925206e-09\\
22	7.41321328818738e-09\\
23	8.10886617093617e-10\\
24	3.55801135671647e-10\\
25	3.80115384844211e-10\\
26	4.80139488453552e-09\\
27	5.78952750476043e-09\\
28	5.76121985724746e-09\\
29	1.60799934600583e-09\\
30	1.73700424400091e-09\\
};

\addplot [color=mycolor5, line width=4pt, forget plot]
  table[row sep=crcr]{%
1	32.8670921008595\\
2	45.7174733336741\\
3	26.8449998744014\\
4	27.0824863314817\\
5	36.0377680319322\\
6	21.4882477163578\\
7	9.59675968713632\\
8	26.8838770582825\\
9	34.1120579774128\\
10	29.3913546791787\\
11	20.1933097457829\\
12	23.2631820348537\\
13	34.3334039080979\\
14	28.9210294821054\\
15	14.8231107292535\\
16	34.2185995470558\\
17	38.638369114261\\
18	20.7372508887709\\
19	16.1237792961638\\
20	25.8389343213241\\
21	32.7895334253228\\
22	15.694810331475\\
23	20.3579120604919\\
24	17.7096928103752\\
25	18.7349426535923\\
26	6.32087736883492\\
27	18.924137203872\\
28	11.0974711105678\\
29	12.5736822644402\\
30	9.06164855440729\\
};
\end{axis}
\node[align = center] at (4.1,2.6) {\textcolor{mycolor2}{$13dB$} \\\textcolor{mycolor2}{$\sr{\Sc}=0.8$}};
\node[align = center] at (7.8,7) {\textcolor{mycolor1}{$0dB$}\\\textcolor{mycolor1}{$\sr{\Sc}=15.6$}};
\node[align = center] at (7.5,10.8) {\textcolor{mycolor5}{$-18dB$}\\\textcolor{mycolor5}{$\sr{\Sc}=567$}};

\end{tikzpicture}
\caption{Convergence, measured as the block maximum norm of the difference between consecutive strategies, of the synchronous EE-IWFA for different SIRs.}
\label{fig:convergence_IWFA}
\end{figure}
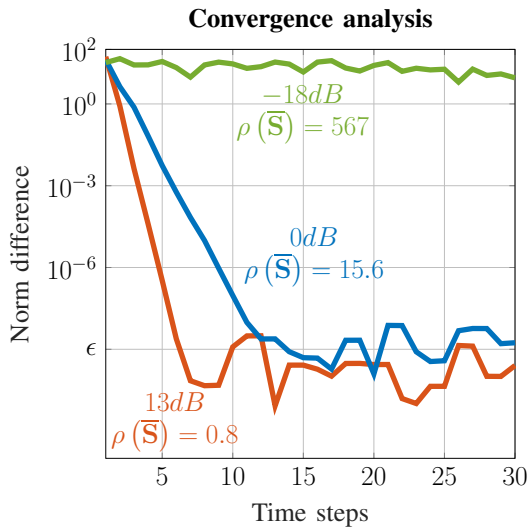%
In \Cref{fig:rateGE_low}, the evolution of the EE of one particular player (coming from the same channel matrices than those of the low SIR curve of \Cref{fig:convergence_IWFA}) is represented. In this figure, it appears the EE-IWFA is stuck in periodic updates. Interestingly, the asynchronous IWFA is more erratic, and starts converging after $t=450$. This can be explained by the fact that as update times are random, asynchronous IWFA explores more strategies, as for example around $t=230$. Finally, one should keep in mind that the above figures display results corresponding to particular channel matrices, and that therefore no rule allows to link the SIR to a convergence guarantee. 

\begin{figure}[ht]
\centering
\scalebox{0.53}{\input{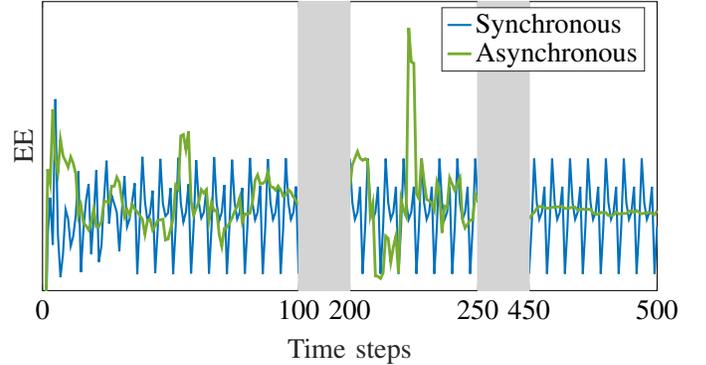}}
\caption{Evolution of the EE for a SIR of $-18$dB. For the sake of clarity, the update times of the asynchronous algorithm are not represented.}
\label{fig:rateGE_low}
\end{figure}

\section{Conclusion}
In this work, we have obtained conditions on the feasibility of competitive resource allocation in MIMO networks for TRPs maximizing their EE. We have proven that if the interference is limited, and if the optimal power varies slowly with regards to the players' strategies, then the EE-IWFA converges to the unique NE in a fully decentralized asynchronous way. The obtained conditions, when particularized to an OFDM network, outperform those of existing literature. At a higher level, two mathematical frameworks widely used to study wireless resource allocations, namely the QVI and the contraction frameworks, have been compared. We have shown that the contraction approach outplays the QVI approach for the MIMO EE resource allocation. Future work includes an in-depth study of the power mapping, as well as the analysis of other QVI uniqueness tools. 



%

\appendices
\section{Extension of uniqueness conditions to complex Hermitian matrices}
\label{app:exNesterov}
Considering a QVI defined on a real vector space, unicity conditions are provided in \cite[Theorem 4 and Corollary 2]{nesterov_solving_2007}. Yet, in our case, we work with complex Hermitian matrices and therefore the above results do not hold straightforwardly. Fortunately, similarly to the bijection between complex numbers and $2\times 2$ real matrices:  
\begin{equation}
    z = a +bj \quad \text{ and } \quad \mathbf{Z} = \frac{1}{2}\begin{bmatrix} a & -b \\ b & a \end{bmatrix},
\end{equation}
this lemma can be generalized to $M\times N$ complex matrices through their bijection with $2M\times 2N$ real matrices.
    Considering $\mathbf{Z} \in \C{M}{N}$, whose elements are denoted by $z_{mn} = a_{mn}+j b_{mn}$ with $a_{mn},b_{mn} \in \mathbb{R}$, the following bijection is defined:
    \begin{footnotesize}
    \begin{align}
    &\mathbf{Z} = \begin{pmatrix}
    z_{11} &  \hdots & z_{1N}\\
    \vdots & \ddots&\vdots\\
    z_{M1} & \hdots& z_{MN}
    \end{pmatrix} \nonumber \\ & \qquad  \iff \mathbf{Z}^{\dagger} = \frac{1}{2} \begin{pmatrix}
    a_{11} & -b_{11}& \hdots & a_{1N} & -b_{1N}\\
    b_{11} & a_{11} &\hdots & b_{1N} & a_{1N}\\
    \vdots & \vdots&\ddots& \vdots&\vdots\\
    a_{M1} &-b_{M1} & \hdots& a_{MN} & -b_{MN}\\
    b_{M1} &a_{M1} & \hdots& b_{MN} & a_{MN}\\
    \end{pmatrix}.
\end{align}
    \end{footnotesize}
From the bijection definition, the following properties can be verified (properties i), iii) and v) being specific to square matrices):
\begin{enumerate}[i)]
    \item $\mathbf{Z} = \mathbf{Z}^H\iff \mathbf{Z}^{\dagger} = \mathbf{Z}^{\dagger^T}$ (as the diagonal terms of $\mathbf{Z}$ are real);
    \item $\mathbf{Z} = \mathbf{XY} \iff \mathbf{Z}^{\dagger} = \mathbf{X}^{\dagger}\mathbf{Y}^{\dagger}$;
    \item $\Tr{\mathbf{Z}} = \Tr{\mathbf{Z}^{\dagger}}$ (for square matrices);
    \item $\norm{\mathbf{Z}}{F}^2 = \Tr{\mathbf{Z}^H\mathbf{Z}} = \Tr{\mathbf{Z}^{\dagger^T}\mathbf{Z}^{\dagger}} = \norm{\mathbf{Z}^{\dagger}}{F}^2$;
    \item $\mathbf{Z}=\mathbf{U\Lambda U}^H \iff \mathbf{Z}^{\dagger}=\mathbf{U}^{\dagger}\mathbf{\Lambda}^{\dagger} \mathbf{U}^{\dagger^T}$ with $\mathbf{\Lambda}^{\dagger}$ diagonal and $\mathbf{U}^{\dagger}\mathbf{U}^{\dagger^T} = \mathbf{I}_{2N}$. Hence, $\mathbf{Z}$ and $\mathbf{Z}^{\dagger}$ have the same eigenvalues but with double multiplicity.
\end{enumerate}
Thanks to the above properties, $\QVI$ can equivalently be defined with real matrices of double size. Hence, uniqueness conditions of \cite{nesterov_solving_2007} can be applied to this equivalent QVI. Again, the three conditions of the lemma can be translated back to the Hermitian matrices through the bijection.

\section{Properties of the EE QVI}
\label{app:QVI_properties}
This appendix focuses on the properties of $\QVI$ which are instrumental to analyze the NE of the EE game. More precisely, we obtain the Lipschitz constant $L$ and the strong monotonicity constant $\mu$ of $\Faggb$ as a function of the interference matrix \eqref{eq:Sdefinition}, respectively in \Cref{lemma:Lipschitz} and \Cref{lemma:monoton}. We also obtain a bound on the variation rate of the strategy set $\setEEAggbs$ in \Cref{lemma:smoothpower}. 
\begin{lemma}
If all the channel matrices $\Hb{qq}$ are square full rank, the operator $\Faggb$ is Lipschitz continuous with constant $L = \sigma_{\max}\pa{\mathbf{I}_Q + \Sc}$: $\forall \Qaggb,\Qaggb^{'}\in \fullsetEEAggb$,
    \begin{equation}
     \norm{\Faggb-\Faggbp}{F}\leq L\norm{\Qaggb-\Qaggb^{'}}{F}. 
     \label{eq:Lipschitz}
     \end{equation}
     \label{lemma:Lipschitz}
\end{lemma}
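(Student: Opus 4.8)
The plan is to exploit the linearity of $\Faggb$ recorded just above the statement. When $\Hb{qq}$ is square and invertible, the reverse-order law gives $\invIntb{q}=\Hb{qq}^{-1}\Rib{q}\Hb{qq}^{-H}$, and after absorbing the $r=q$ contribution via $\Hb{qq}^{-1}\Hb{qq}=\mathbf{I}$ one has $\Fb{q}=\Hb{qq}^{-1}\mathbf{R}_{n_q}\Hb{qq}^{-H}+\sum_{r=1}^Q\Hb{qq}^{-1}\Hb{qr}\Qb{r}\Hb{qr}^H\Hb{qq}^{-H}$. First I would form the difference $\Fb{q}-\Fbp{q}$: the constant noise term cancels, and writing $\Eb{r}\triangleq\Qb{r}-\Qb{r}^{'}$ the mapping reduces to $\Fb{q}-\Fbp{q}=\Eb{q}+\sum_{r\neq q}(\Hb{qq}^{-1}\Hb{qr})\Eb{r}(\Hb{qq}^{-1}\Hb{qr})^H$, where the diagonal term collapses to $\Eb{q}$ itself.

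Next I would bound each summand in Frobenius norm. Using the submultiplicative inequalities $\norm{\mathbf{AB}}{F}\le\sigma_{\max}\pa{\mathbf{A}}\norm{\mathbf{B}}{F}$ and $\norm{\mathbf{BC}}{F}\le\norm{\mathbf{B}}{F}\sigma_{\max}\pa{\mathbf{C}}$ applied twice, I obtain $\norm{(\Hb{qq}^{-1}\Hb{qr})\Eb{r}(\Hb{qq}^{-1}\Hb{qr})^H}{F}\le\sigmsq{\Hb{qq}^{-1}\Hb{qr}}\norm{\Eb{r}}{F}=\br{\Sc}_{qr}\norm{\Eb{r}}{F}$, recognising exactly the off-diagonal entries of $\Sc$ from \eqref{eq:Sdefinition}. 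The triangle inequality then delivers the scalar bound $\norm{\Fb{q}-\Fbp{q}}{F}\le\norm{\Eb{q}}{F}+\sum_{r\neq q}\br{\Sc}_{qr}\norm{\Eb{r}}{F}$.

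Finally I would lift this to a vector statement. Collecting the nonnegative scalars $f_q\triangleq\norm{\Fb{q}-\Fbp{q}}{F}$ and $e_q\triangleq\norm{\Eb{q}}{F}$ into vectors $\mathbf{f},\mathbf{e}$, the last inequality reads componentwise as $\mathbf{0}\le\mathbf{f}\le\pa{\mathbf{I}_Q+\Sc}\mathbf{e}$, the $\norm{\Eb{q}}{F}$ term supplying the identity and the cross terms supplying $\Sc$. Since $\mathbf{I}_Q+\Sc$ is entrywise nonnegative and $\mathbf{e}\ge\mathbf{0}$, monotonicity of the Euclidean norm on the nonnegative orthant yields $\norm{\mathbf{f}}{2}\le\norm{\pa{\mathbf{I}_Q+\Sc}\mathbf{e}}{2}\le\sigma_{\max}\pa{\mathbf{I}_Q+\Sc}\norm{\mathbf{e}}{2}$. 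Identifying $\norm{\mathbf{f}}{2}=\norm{\Faggb-\Faggbp}{F}$ and $\norm{\mathbf{e}}{2}=\norm{\Qaggb-\Qaggb^{'}}{F}$ then closes the argument with $L=\sigma_{\max}\pa{\mathbf{I}_Q+\Sc}$.

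The main obstacle, though mild, is the passage from the entrywise scalar bound to the norm inequality: I must justify carefully that $\mathbf{0}\le\mathbf{f}\le\pa{\mathbf{I}_Q+\Sc}\mathbf{e}$ implies $\norm{\mathbf{f}}{2}\le\norm{\pa{\mathbf{I}_Q+\Sc}\mathbf{e}}{2}$, which rests on both vectors being nonnegative and on every entry of $\mathbf{I}_Q+\Sc$ being nonnegative. The second delicate point is correctly isolating the $r=q$ diagonal contribution so that it produces precisely the identity block rather than an $\Sc$-type term; everything else is a routine application of standard singular-value bounds.
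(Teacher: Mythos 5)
Your proposal is correct and follows essentially the same route as the paper's proof: the same decomposition of $\Fb{q}-\Fbp{q}$ into $\Eb{q}$ plus the interference terms, the same singular-value bound $\sigmsq{\Hb{qq}^{-1}\Hb{qr}}\norm{\Eb{r}}{F}$ on each summand, and the same lifting of the componentwise inequality $\mathbf{f}\le\pa{\mathbf{I}_Q+\Sc}\mathbf{e}$ to the Euclidean norm via nonnegativity and the induced 2-norm. The two points you flag as delicate (monotonicity of the norm on the nonnegative orthant and isolating the $r=q$ term as the identity block) are handled exactly as you describe in the paper's argument.
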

\begin{proof}
Defining $\Eb{q} \triangleq \Qb{q}- \Qb{q}^{'}$,
 \begin{align} &\norm{\Fb{q}-\Fbp{q}}{F} \nonumber\\
         &\leq \norm{\Eb{q}}{F} + \sum_{r\neq q}\norm{\Hb{qq}^{-1}\Hb{qr}\Eb{r}\Hb{qr}^H\Hb{qq}^{-H}}{F},\label{eq:EE_triangleeq} \\
         &\leq \norm{\Eb{q}}{F} + \sum_{r\neq q}\sr{\Hb{qr}^H\Hb{qq}^{-H}\Hb{qq}^{-1}\Hb{qr}}\norm{\Eb{r}}{F},\label{eq:EE_norm_ineq}
    \end{align}
    where \eqref{eq:EE_triangleeq} follows from the triangular inequality and \eqref{eq:EE_norm_ineq} from \cite[Prop. 8.4.13]{bernstein_dennis_s._matrix_2009} and the cyclic property of the trace. Letting $\mathbf{f} \triangleq \begin{bmatrix} f_1 \hdots f_Q\end{bmatrix}^T$ with $f_q = \norm{\Fb{q}-\Fbp{q}}{F}$, and $\mathbf{e} \triangleq \begin{bmatrix} e_1 \hdots e_Q\end{bmatrix}^T$ with $e_q =\norm{\Eb{q}}{F}$, since $\sigmsq{\Hb{qq}^{-1}\Hb{qr}} = \sr{\Hb{qr}^H\Hb{qq}^{-H}\Hb{qq}^{-1}\Hb{qr}}$, 
    from the definition of $\Sc$ and $\mathbf{e}$,
    \begin{align}
        f_q \leq \br{\pa{\mathbf{I}_Q + \Sc}\mathbf{e}}_{q} \, \implies \, \norm{\mathbf{f}}{2}&\leq \norm{\pa{\mathbf{I}_Q + \Sc}\mathbf{e}}{2} \\&\leq \norm{\mathbf{I}_Q + \Sc}{2}\norm{\mathbf{e}}{2},
    \end{align}
    where the second inequality comes from the submultiplicative property of induced norms \cite[Corollary 9.4.4]{bernstein_dennis_s._matrix_2009}. As the norm of $\mathbf{f}$ and $\mathbf{e}$ are equal respectively to the left and right norm of \eqref{eq:Lipschitz}, from the link between 2-norms and singular values \cite[Prop. 9.4.9]{bernstein_dennis_s._matrix_2009}, the lemma is obtained.
\end{proof}
 \begin{lemma} If all the channel matrices $\Hb{qq}$ are square full rank, and if $\sr{\Sc^{s}} < 1$, the operator $\Faggb$ is strongly monotone with constant $\mu = 1 -\sr{\Sc^{s}}$: $\forall \Qaggb, \, \Qaggb^{'} \in \setAggb$,
        \begin{equation}
            \Tr{\pa{\Faggb - \Faggbp}^H\pa{\Qaggb - \Qaggb^{'}}}\geq \mu \norm{\Qaggb-\Qaggb^{'}}{F}^2.
            \label{eq:fullStrongMono}
        \end{equation}
        \label{lemma:monoton}
        \end{lemma}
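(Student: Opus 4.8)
The plan is to exploit the linearity of $\Faggb$ in the square nonsingular case and reduce the matrix inequality to a scalar quadratic form in the vector of Frobenius norms, mirroring the structure of the Lipschitz proof (\Cref{lemma:Lipschitz}) but tracking signs carefully so that the \emph{symmetric} part $\Sc^{s}$ emerges. First I would set $\Eb{q} \triangleq \Qb{q} - \Qb{q}^{'}$ and note that the constant noise term cancels, so that
\begin{equation}
    \Fb{q} - \Fbp{q} = \sum_{r=1}^Q \Hb{qq}^{-1}\Hb{qr}\Eb{r}\Hb{qr}^H\Hb{qq}^{-H},
\end{equation}
which is Hermitian since each $\Eb{r}$ is. Consequently $\pa{\Faggb - \Faggbp}^H = \Faggb - \Faggbp$, and the left-hand side of \eqref{eq:fullStrongMono} splits as $\sum_{q=1}^Q \Tr{\pa{\Fb{q} - \Fbp{q}}\Eb{q}}$.

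Next I would separate the $r = q$ term from the $r \neq q$ terms in each summand. Because $\Hb{qq}^{-1}\Hb{qq} = \mathbf{I}_{r_q}$, the diagonal term collapses to $\Tr{\Eb{q}^2} = \norm{\Eb{q}}{F}^2$. For the off-diagonal terms, using the Hermitian property together with Cauchy--Schwarz for the Frobenius inner product and the submultiplicativity $\norm{\mathbf{A}\mathbf{X}\mathbf{B}}{F}\leq \sigma_{\max}\pa{\mathbf{A}}\sigma_{\max}\pa{\mathbf{B}}\norm{\mathbf{X}}{F}$, I would bound
\begin{equation}
    \abs{\Tr{\Hb{qq}^{-1}\Hb{qr}\Eb{r}\Hb{qr}^H\Hb{qq}^{-H}\Eb{q}}} \leq \sigmsq{\Hb{qq}^{-1}\Hb{qr}}\norm{\Eb{r}}{F}\norm{\Eb{q}}{F},
\end{equation}
recognizing the coefficient as $\br{\Sc}_{qr}$. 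The key point, and the source of the improvement over a naive bound, is that the diagonal term is \emph{added} while the cross terms are only controlled in absolute value and hence \emph{subtracted}: each summand is thus lower-bounded by $\norm{\Eb{q}}{F}^2 - \sum_{r\neq q}\br{\Sc}_{qr}\norm{\Eb{r}}{F}\norm{\Eb{q}}{F}$.

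Introducing $\mathbf{e} \triangleq \begin{bmatrix} \norm{\Eb{1}}{F} & \hdots & \norm{\Eb{Q}}{F}\end{bmatrix}^T$, summing over $q$ and using that $\Sc$ has zero diagonal, I would assemble the bound into the quadratic form $\mathbf{e}^T\pa{\mathbf{I}_Q - \Sc}\mathbf{e}$. The crucial final step exploits that a real quadratic form equals its symmetrization, $\mathbf{e}^T\Sc\mathbf{e} = \mathbf{e}^T\Sc^{s}\mathbf{e}$, so the antisymmetric part is invisible; since $\Sc^{s}$ is real symmetric and entrywise nonnegative, its Rayleigh quotient is at most its largest eigenvalue, which coincides with $\sr{\Sc^{s}}$. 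Hence $\mathbf{e}^T\pa{\mathbf{I}_Q - \Sc}\mathbf{e} \geq \pa{1 - \sr{\Sc^{s}}}\norm{\mathbf{e}}{2}^2$, and using $\norm{\mathbf{e}}{2}^2 = \norm{\Qaggb - \Qaggb^{'}}{F}^2$ yields \eqref{eq:fullStrongMono} with $\mu = 1 - \sr{\Sc^{s}}$. The main obstacle is not any single calculation but the bookkeeping that forces $\Sc^{s}$ rather than $\Sc$ to appear: one must symmetrize only \emph{after} passing to the scalar quadratic form, and the hypothesis $\sr{\Sc^{s}} < 1$ is exactly what keeps $\mu$ positive.
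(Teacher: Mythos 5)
Your proposal is correct and follows essentially the same route as the paper's proof: isolate the $r=q$ term to get $\norm{\Eb{q}}{F}^2$, bound each cross term by $\sigmsq{\Hb{qq}^{-1}\Hb{qr}}\norm{\Eb{r}}{F}\norm{\Eb{q}}{F}$ via Cauchy--Schwarz and submultiplicativity, assemble the quadratic form $\mathbf{e}^T\pa{\mathbf{I}_Q-\Sc}\mathbf{e}$, and pass to the symmetric part to conclude with $1-\sr{\Sc^{s}}$. The only cosmetic difference is that the paper phrases the final step as $\lambda_{\min}\pa{\mathbf{I}_Q-\Sc^{s}}=1-\lambda_{\max}\pa{\Sc^{s}}$ rather than via the Rayleigh quotient, which is the same argument.
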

\begin{proof}
The $q$th term of the LHS of \eqref{eq:fullStrongMono} can be bounded as
        \begin{align}
        &\Tr{\pa{\Fb{q}-\Fbp{q}}\Eb{q}}\quad \quad  \quad \quad \nonumber\\ 
             & \geq  \norm{\Eb{q}}{F}^2-\sum_{r\neq q}\abs{\Tr{\Hb{qq}^{-1}\Hb{qr}\Eb{r}\Hb{qr}^H\Hb{qq}^{-H}\Eb{q}}},\label{eq:triangineqtrace}
        \end{align}

where the lower bound comes from the triangular inequality. Moreover, from \cite[Fact 9.3.9, Corollary 9.3.7]{bernstein_dennis_s._matrix_2009}, \begin{align}
 &\abs{\Tr{\Hb{qq}^{-1}\Hb{qr}\Eb{r}\Hb{qr}^H\Hb{qq}^{-H}\Eb{q}}} \nonumber\\& \qquad \leq \norm{\Hb{qq}^{-1}\Hb{qr}\Eb{r}}{F}\norm{\Hb{qr}^H\Hb{qq}^{-H}\Eb{q}}{F},\label{eq:boundabstrace}\\
            & \qquad \leq \sigmsq{\Hb{qq}^{-1}\Hb{qr}}\norm{\Eb{r}}{F}\norm{\Eb{q}}{F}\label{eq:boundsingprod}.
        \end{align}
Grouping \eqref{eq:triangineqtrace} and \eqref{eq:boundsingprod} for all $q\in \Omega$,
\begin{align}
    &\Tr{\pa{\Faggb - \Faggbp}^H\pa{\Qaggb - \Qaggb^{'}}}\nonumber\\
     & \qquad \qquad \qquad \qquad \qquad \qquad\geq \mathbf{e}^T\pa{\mathbf{I}_{Q}-\Sc}\mathbf{e},\\
    & \qquad \qquad \qquad \qquad \qquad \qquad\geq \lambda_{\min}\pa{\mathbf{I}_{Q}-\Sc^s} \norm{\mathbf{e}}{2}^2
\end{align} 
with $\mathbf{e}$ being defined in the proof of \Cref{lemma:Lipschitz} and the second inequality following from \cite[Fact 3.7.5, Fact 8.15.17]{bernstein_dennis_s._matrix_2009}. Since $\lambda_{\min}\pa{\mathbf{I}_Q-\Sc^{s}} = 1-\lambda_{\max}\pa{\Sc^{s}}$ \cite[Prop. 4.4.5.ix]{bernstein_dennis_s._matrix_2009} and since the eigenvalues of $\Sc^s$ are real, the lemma is obtained. 
\end{proof}

\begin{lemma}
$\forall \Qaggb,\Qaggb^{'},\overline{\mathbf{Y}}\in \fullsetEEAggb$, the strategy dependent sets are such that 
\begin{align}
    \norm{\br{\overline{\mathbf{Y}}}_{\setEEAggbs}-\br{\overline{\mathbf{Y}}}_{\setEEAggbps}}{F} \leq  \norm{\PEEaggb - \PEEaggbp}{2}.
\end{align}
\label{lemma:smoothpower}
\end{lemma}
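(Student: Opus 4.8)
The plan is to reduce the statement to a per-user eigenvalue inequality through the waterfilling description of the projection. First I would exploit the product structure of the strategy-dependent sets: since $\setEEAggbs = \bigtimes_{q=1}^Q \Qbareq{\PEEb{q}}$ and the squared Frobenius norm of a stacked matrix is the sum of the squared Frobenius norms of its blocks, the projection onto the Cartesian product decouples blockwise. Consequently $\br{\overline{\mathbf{Y}}}_{\setEEAggbs}$ is the stacking of the individual projections $\br{\overline{\mathbf{Y}}_q}_{\Qbareq{\PEEb{q}}}$, and it suffices to prove the per-user bound $\norm{\br{\overline{\mathbf{Y}}_q}_{\Qbareq{\PEEb{q}}} - \br{\overline{\mathbf{Y}}_q}_{\Qbareq{\PEEbp{q}}}}{F}\leq \abs{\PEEb{q}-\PEEbp{q}}$. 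Squaring and summing over $q$ then recovers the claim, because $\sum_{q}\abs{\PEEb{q}-\PEEbp{q}}^2 = \norm{\PEEaggb-\PEEaggbp}{2}^2$.

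Next, for a single block I would invoke the waterfilling characterization of the projection onto the spectrahedron $\Qbareq{P}=\acc{\mathbf{Q}\succeq \mathbf{0},\,\Tr{\mathbf{Q}}=P}$. Writing the eigendecomposition of $\overline{\mathbf{Y}}_q$ with eigenvalues $y_1\geq\dots\geq y_{r_q}$, the minimizer of $\norm{\mathbf{Q}-\overline{\mathbf{Y}}_q}{F}^2$ over $\Qbareq{P}$ shares the eigenbasis of $\overline{\mathbf{Y}}_q$ (for a fixed spectrum of $\mathbf{Q}$ the distance is smallest when the eigenvectors are aligned), and its eigenvalues are $\pa{y_i-\mu(P)}^+$, where the water level $\mu(P)$ is fixed by $\sum_i\pa{y_i-\mu(P)}^+ = P$. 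This is precisely the structure already exhibited in \eqref{eq:waterfilling_EE}. The decisive consequence is that the two projections onto $\Qbareq{\PEEb{q}}$ and $\Qbareq{\PEEbp{q}}$ are carried out on the \emph{same} eigenbasis, so their Frobenius distance equals the Euclidean distance between the two waterfilled eigenvalue vectors.

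The final, and conceptually crucial, step reduces an $\ell_2$ distance to an $\ell_1$ one. Assuming without loss of generality $P\triangleq\PEEb{q}\geq\PEEbp{q}\triangleq P'$, the monotonicity of $\mu\mapsto\sum_i\pa{y_i-\mu}^+$ forces $\mu(P)\leq\mu(P')$, so each eigenvalue difference $c_i\triangleq\pa{y_i-\mu(P)}^+-\pa{y_i-\mu(P')}^+$ is nonnegative, while $\sum_i c_i = P-P'$. Since the $c_i$ all have the same sign, $\sum_i c_i^2 \leq \pa{\sum_i c_i}^2$, which yields
\[
\norm{\br{\overline{\mathbf{Y}}_q}_{\Qbareq{P}} - \br{\overline{\mathbf{Y}}_q}_{\Qbareq{P'}}}{F} \leq \abs{P-P'},
\]
completing the block bound and hence the lemma.

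I expect the main obstacle to be the second step, namely justifying rigorously that the projection onto $\Qbareq{P}$ retains the eigenbasis of $\overline{\mathbf{Y}}_q$ and realizes the waterfilling, as this is exactly what makes the two projections comparable on a common basis; once that common-basis reduction is in place, the monotonicity and same-sign argument is elementary. A minor point to handle carefully is the non-uniqueness of eigenvectors in the presence of repeated eigenvalues, which does not affect the eigenvalue vectors entering the bound.
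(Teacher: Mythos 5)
Your proposal is correct and follows essentially the same route as the paper's proof: decomposition over the Cartesian product, the waterfilling characterization placing both projections in the common eigenbasis of $\overline{\mathbf{Y}}_q$ so that only the water level changes, and the observation that the eigenvalue differences all share the same sign, which reduces the $\ell_2$ distance to the telescoping $\ell_1$ sum $\abs{\PEEb{q}-\PEEbp{q}}$. The paper likewise leans on the established projection--waterfilling equivalence (its Eq.~\eqref{eq:EE_projection} and \eqref{eq:waterfilling_EE}) for the common-basis step you flag as the main obstacle, so no additional argument is needed there.
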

\begin{proof}
As $\setEEAggbs$ is a Cartesian product of $Q$ sets, the projection can be decomposed in $Q$ projections to give:
     \begin{align}
         &\norm{\br{\overline{\mathbf{Y}}}_{\setEEAggbs}-\br{\overline{\mathbf{Y}}}_{\setEEAggbps}}{F}^2 \nonumber \\&= \sum_{q=1}^Q \norm{\br{\overline{\mathbf{Y}}_q}_{\Qbareq{\PEEb{q}}}-\br{\overline{\mathbf{Y}}_q}_{\Qbareq{\PEEbp{q}}}}{F}^2.
         \label{eq:smoothfull}
     \end{align}
     From the equivalence between the projection \eqref{eq:EE_projection} and the waterfilling \eqref{eq:waterfilling_EE}, defining the EVD of $\overline{\mathbf{Y}}_q$ as $\mathbf{U}_{Y,q}\pa{-\mathbf{D}_{Y,q}}\mathbf{U}_{Y,q}^H$, the following holds:
     \begin{align}
         & \br{\overline{\mathbf{Y}}_q}_{\Qbareq{\PEEb{q}}}  = \mathbf{U}_{Y,q}\pa{\mu_q\mathbf{I}_{r_q}-\mathbf{D}_{Y,q}}^+\mathbf{U}_{Y,q}^H \nonumber\\&  \text{with}\, \mu_q\,  \st\quad  \Tr{\pa{\mu_q\mathbf{I}_{r_q}-\mathbf{D}_{Y,q}}^+}=\PEEb{q}.
     \end{align}
     It should be noted that the dependency on $\Qaggb$ and $\Qaggb^{'}$ is limited to $\mu_q$, as $\mathbf{U}_{Y,q}$ and $\mathbf{D}_{Y,q}$ only depend on $\overline{\mathbf{Y}}$.  Hence, letting $\mathbf{d}_{Y,q} = \text{diag}\pa{\mathbf{D}_{Y,q}}$, the $q$th term of \eqref{eq:smoothfull} can be bounded as follows.
     \begin{align}
         & \norm{\br{\overline{\mathbf{Y}}_q}_{\Qbareq{\PEEb{q}}}-\br{\overline{\mathbf{Y}}_q}_{\Qbareq{\PEEbp{q}}}}{F}\nonumber\\ &=\norm{\pa{\mu_q\mathbf{I}_{r_q}-\mathbf{D}_{Y,q}}^+-\pa{\mu_q^{'}\mathbf{I}_{r_q}-\mathbf{D}_{Y,q}}^+}{F}, \label{eq:smooth_frobinv}\\
         &  = \norm{\pa{\mu_q\mathbf{1}_{r_q\times 1}-\mathbf{d}_{Y,q}}^+-\pa{\mu_q^{'}\mathbf{1}_{r_q\times 1}-\mathbf{d}_{Y,q}}^+}{2},\label{eq:smooth_diag}\\
         &\leq \norm{\pa{\mu_q\mathbf{1}_{r_q\times 1}-\mathbf{d}_{Y,q}}^+-\pa{\mu_q^{'}\mathbf{1}_{r_q\times 1}-\mathbf{d}_{Y,q}}^+}{1},
         \label{eq:smooth_12norm}
     \end{align}
     where \eqref{eq:smooth_frobinv} follows from the unitary invariance of the Frobenius norm, \eqref{eq:smooth_diag} from the diagonal structure of the matrices and  \eqref{eq:smooth_12norm} from the inequality between 2-norms and 1-norms \cite[Prop. 9.1.5]{bernstein_dennis_s._matrix_2009}. Without loss of generality, suppose that $\mu_q \geq \mu_q^{'}$ (corresponding to $\PEEb{q}\geq \hat{P}_q(\Qb{-q}^{'})$). In that case, all the vector elements of \eqref{eq:smooth_12norm} are positive, the 1-norm hence boiling down to 
     \begin{align}
    &\sum_{n=1}^{r_q}\pa{ \pa{\mu_q-\br{\mathbf{d}_{Y,q}}_n}^+ - \pa{\mu_q^{'}-\br{\mathbf{d}_{Y,q}}_n}^+}\nonumber\\
         & \qquad \qquad \qquad\qquad \qquad=  \PEEb{q}-\PEEbp{q}.
     \end{align}
     Grouping together the $Q$ bounds, the lemma follows.
\end{proof}




\ifCLASSOPTIONcaptionsoff
  \newpage
\fi



\bibliographystyle{IEEEtran}
\bibliography{IEEEabrv,bare_jnrl.bib}
%



%
\ifarchiv
\else
\begin{IEEEbiography}{Guillaume Thiran} 
\end{IEEEbiography}

\begin{IEEEbiography}{Ivan Stupia}
Biography text here.
\end{IEEEbiography}


\begin{IEEEbiography}{Luc Vandendorpe}
Biography text here.
\end{IEEEbiography}
\fi




\end{document}